\documentclass{IEEEtran}
\IEEEoverridecommandlockouts
\usepackage{cite}
\usepackage{amsmath,amssymb,amsfonts,bm,amsthm}
\usepackage{algorithmic}
\usepackage{graphicx}
\usepackage{textcomp}
\usepackage{multirow}
\usepackage{xcolor}
\def\BibTeX{{\rm B\kern-.05em{\sc i\kern-.025em b}\kern-.08em
    T\kern-.1667em\lower.7ex\hbox{E}\kern-.125emX}}

\newcommand{\dv}[1]{{\boldsymbol{#1}}}
\newcommand{\dmt}[1]{\MakeUppercase{\boldsymbol{#1}}}
\newcommand{\rs}[1]{ \mathsf{#1}} 
\newcommand{\rv}[1]{\bm{{\mathsf{#1}}} }

\usepackage{url}

\usepackage{booktabs}

\newtheorem{proposition}{Proposition}
\newtheorem{remark}{Remark}

\usepackage{hyperref}

\usepackage{algorithm}
\usepackage{float}

\usepackage{lipsum}

\usepackage{relsize}

\begin{document}

\title{Rare-Event Simulation of Outage Probability in GSC/MRC Systems under Rician Fading
}

\author{
\IEEEauthorblockN{
Mahmoud Ghazal\IEEEauthorrefmark{1},
Nadhir Ben Rached\IEEEauthorrefmark{2},
and Tareq Al-Naffouri\IEEEauthorrefmark{1}
}

\IEEEauthorblockA{\IEEEauthorrefmark{1}Computer, Electrical and Mathematical Science and Engineering Division,\\
King Abdullah University of Science and Technology (KAUST), Thuwal, Saudi Arabia\\
Email: {mahmoud.ghazal@kaust.edu.sa}, {tareq.alnaffouri@kaust.edu.sa}\\
ORCID: \href{https://orcid.org/0009-0003-9827-3516}{0009-0003-9827-3516}, 
\href{https://orcid.org/0000-0001-6955-4720}{0000-0001-6955-4720}
}

\IEEEauthorblockA{\IEEEauthorrefmark{2}School of Mathematics, Faculty of Engineering and Physical Sciences,\\
University of Leeds, Leeds, United Kingdom\\
Email: {n.benrached@leeds.ac.uk}\\
ORCID: \href{https://orcid.org/0000-0003-4471-4075}{0000-0003-4471-4075}
}
}

\maketitle

\begin{abstract}
This paper studies the estimation of outage probability in GSC/MRC SIMO systems under Rician fading in the rare-event regime. The difficulty arises from the evaluation of the CDF of a partial sum of ordered non-central chi-square random variables, motivating the use of enhanced Monte-Carlo methods. For independent fading, we propose partition importance sampling (PIS), a theory-driven estimator tailored to this
structure, and prove that it achieves bounded relative error (BRE). We further adapt exponential twisting, proving its BRE property, and cross-entropy to this setting. We then extend ET and CE to correlated Rician fading, where the joint distribution of the power gains is no longer tractable, yielding the ETC and CEC estimators; ETC enjoys the bounded-relative-error guarantee under arbitrary mean and arbitrary covariance. Numerical experiments compare these methods with universal importance sampling and multilevel splitting for independent fading, and an asymptotic approximation in the correlated case. Empirically, CE shows the most robust performance in the independent case; PIS and ET are competitive but degrade for larger means, with ET further degrading when the selected subset is much smaller than the antenna array. ETC yields a better estimate than the asymptotic approximation for moderately rare events.
\end{abstract}

\begin{IEEEkeywords}
diversity schemes, Rician channels, Monte Carlo methods
\end{IEEEkeywords}


\section{Introduction}

    Several diversity schemes can be used to combine the inputs from multiple antennas in a single-input-multiple-output (SIMO) system. Notably, among them, generalized selection combining (GSC) allows for the choice of a smaller subset of antennas, usually with fixed size, in order to combine their inputs \cite{winAnalysisHybridSelectionMRCRician1999,alouiniMGFbasedAnalysisGSC2000}. The scheme is called so since it can be regarded as a generalized version of selection combining (SC), which uses only one antenna. Contrary to full combining schemes that traditionally utilize the whole set such as maximum ratio combining (MRC) and equal gain combining (EGC), GSC requires a smaller number of radio frequency (RF) chains, the use of which may be expensive \cite{sanayeiAntennaSelectionMIMOSystems}. To combine the signals from the smaller subset, MRC or EGC can be used. In our work, we are concerned with an MRC combination. We follow \cite{benrachedUniversalSplittingEstimator2020} and use the term GSC/MRC. 

    Performance analysis of GSC has attracted a number of researchers from the community. Fading regimes such as Rayleigh and Nakagami-m allow for closed-form expressions of outage probability (OP), as demonstrated in  \cite{namExactCaptureGSCind2013} and \cite{namNewClosedFormOrderStatisticsNakagami2017}. Finding OP boils down to evaluating the probability that a partial sum of order statistics is less than a given threshold. In many fading scenarios such as Rician, Weibull, and log-normal fading regimes, closed-form expressions are not available. Some researchers have resorted to MGF-based methods \cite{annamalaiUnifiedErrorProbaGSCRician2002,bithasPerformanceAnalysisGSCWeibull}. Another possible solution is to consider Monte-Carlo (MC) simulations. However, naive (crude, standard) MC  becomes impractical in the rare-event regime, since the required sample size scales inversely with the outage probability \cite{kroese2013handbook}. This has opened the door to a number of enhanced MC methods, as in \cite{benrachedSumOrderStatistics2018} and \cite{benrachedUniversalSplittingEstimator2020}.

    When it comes to the Rician fading regime, the literature on OP is limited to, to the best of our knowledge, MGF-based methods \cite{annamalaiUnifiedErrorProbaGSCRician2002,maEfficientPerformanceGSC2004,annamalaiAnalysisGeneralized2006}, and asymptotic cases \cite{songAsymptoticAnalysisGSCRician2013}. The MGF-based method starts with the MGF of the SNR, whose calculation requires a numerical integration scheme, and then inverts the MGF using another numerical integration scheme. It is also restricted to independent and identically distributed (i.i.d.) channel gains. This opens the door to a dedicated study of enhanced MC techniques of GSC diversity receivers under the Rician fading regime.


    In this paper, we (i) propose partition importance sampling (PIS), a new IS scheme that exploits the closure of independent non-central chi-square random variables under summation, and prove that it yields a bounded-relative-error (BRE) estimator of the OP; (ii) adapt the approximate exponential twisting (ET) scheme of \cite{benrachedEfficientImportanceSampling2021}, which was developed for full sums, to partial sums of order statistics, and prove that it is also a BRE estimator; (iii) adapt the cross-entropy (CE) scheme of \cite{deBoerTutorialCrossEntropy2005} to a parametric family of scaled non-central chi-squares, and give a sufficient condition for the BRE property; (iv) extend ET and CE to correlated Rician fading (ETC and CEC), where the joint distribution of the squared gains is no longer tractable, with ETC shown to have BRE; (v) benchmark the independent methods against universal importance sampling (UIS) \cite{benrachedSumOrderStatistics2018} and multi-level splitting (MLS) \cite{benrachedUniversalSplittingEstimator2020}; and (vi) compare the performance of ETC with a previously-established asymptotic approximation.
    

    The rest of the paper is structured as follows. Section \ref{sec:problem-formulation} formulates the problem. In Section \ref{sec:methods-independent}, structure-aware enhanced MC methods for the independent case are presented. Section \ref{sec:methods-dependent} shows those for the correlated case. Section \ref{sec:comparison-theoretical} compares the theoretical applicability and guarantees of the methods, and Section \ref{sec:comparison-numerical} utilizes numerical applications to compare the performance of the aforementioned methods. Finally, Section \ref{sec:conclusion} concludes the paper.

\section{Problem Formulation} \label{sec:problem-formulation}

    Consider a SIMO system with $M$ receive antennas, out of which only $m\leq M$ antennas can be utilized. The set of $m$ antennas is chosen so that the SNR is optimized. In case of known channel state information (CSI), this is equivalent to maximizing the sum of $m$ squared moduli of channel gains, i.e., applying MRC to the smaller set. Under Rician fading, the channel gains follow non-central complex normal distributions. We are interested in evaluating the OP for small thresholds, i.e., in the case of an outage being a rare event. Let $\rv{h}=[\rs{h}_1,\ldots,\rs{h}_M]^T$ be the vector of arbitrarily correlated complex normal channel gains: \begin{equation}
        \rv{h}\sim\mathcal{CN}(\dv{\mu},\dmt{\Sigma}). 
    \end{equation} Write $\dv{\mu}=[\mu_1,\ldots,\mu_M]^T$, and $\Sigma=(\rho_{ij}\sigma_i\sigma_j)_{1\leq i,j \leq M}$ with $\rho_{ii}=1$. Assume the covariance matrix $\dmt{\Sigma}$ is nonsingular.
    
    Let $\mathcal{S}$ be the sought event of outage, i.e., \begin{equation}
        \mathcal{S}=\left\{\max \sum_{k=1}^m |\rs{h}_{j_k}|^2 \leq \gamma_{\mathrm{th}}\right\},
    \end{equation} where maximization is over the $m$-subsets of indices $\{j_1,\ldots,j_m\}\subset \{1,2,\ldots,M\}$, and $\gamma_{\mathrm{th}}$ denotes the threshold (scaled) SNR. To simplify the notation, we will write \[\rs{X}_i=|\rs{h}_i|^2 \text{ for } i=1,2,\ldots,M.\] Sorting the squared moduli in decreasing order $$\rs{X}^{(1)}\geq \rs{X}^{(2)} \geq \ldots \geq \rs{X}^{(M)},$$ the outage event can be clearly rewritten as \footnote{The first expression presents the OP as the CDF of a vector of quadratic forms with diagonal matrices, while the second utilizes order statistics} \begin{equation}
        \mathcal{S}=\left\{H(\rv{X}):=\sum_{k=1}^m \rs{X}^{(k)} \leq \gamma_{\mathrm{th}}\right\}
    \end{equation} where \(\rv{X}=[\rs{X}_1,\ldots,\rs{X}_M]\). Let $P=\mathbb{P}(\mathcal{S})$ be the sought probability of outage.
    
    Evidently, we have \begin{equation}
        \rs{X}_i\sim\frac{\sigma_i^2}{2}\chi_2^2\left(\dfrac{2|\mu_i|^2}{\sigma_i^2}\right).
    \end{equation} The probability density function (PDF) of $\rs{X}_i$ is given by \[f_{\rs{X}_i}(x)=\dfrac{e^{-|\mu_i|^2/\sigma_i^2}}{\sigma_i^2}\exp\left(-\dfrac{x}{\sigma_i^2}\right)I_0\left(2\dfrac{|\mu_i|}{\sigma_i^2}\sqrt{x}\right),\; x>0.\] 
    
    The chosen channel distribution corresponds to a Rician fading scenario in which $\sigma_i^2$ is the power of the scattered paths, and $\nu_i^2=|\mu_i|^2$ is the power in the direct path, for $i=1,\ldots,M$. Note that the Rician fading has another representation, in which, the $K$-factors are $K_i=|\mu_i|^2/|\sigma_i|^2$, and total powers are $\Omega_i=|\mu_i|^2+\sigma_i^2$, again for $i=1,2,\ldots,M$.

    The outage event is rare when the preset threshold $\gamma_{\mathrm{th}}$ is small relative to the average total power per antenna. Fix the total number of antennas $M$ and the number of chosen antennas $m$. Since $\sigma_i^2$ is a scaling parameter, it can be normalized out, and the genuinely distinct regimes are: either (1) $\gamma_{\mathrm{th}}$ going to zero for fixed means, or (2) $\|\dv{\mu}\|$ going to infinity. In our work, we establish bounded relative error of some estimators in the former regime, and examine both in numerical experiments. The vanishing threshold regime corresponds to moderate $K$-factors: the dominance of the line-of-sight (LOS) component does not override the contribution of the non-LOS components.

\section{Estimation Methods: Independent Fading with Identical Scattering Power} \label{sec:methods-independent}

     In this section, we assume that the channels are independent and the scattering powers are identical. Without loss of generality, we can assume unit scattering powers: $\sigma_i^2=1$, for $i=1,\ldots,M$. Hence, we have \begin{equation}
        \rv{h}\sim\mathcal{CN}(\dv{\mu},\dmt{I}_M).
    \end{equation}

    Consider an unbiased estimator $\hat{P}$ of the probability $P$. Clearly, both are functions of the threshold $\gamma_{\mathrm{th}}$. As we are concerned with rare events, we are interested in the behaviour of the estimate as $\gamma_{\mathrm{th}}$ tends to zero. The \textit{estimated relative error} of the estimator can be defined as \begin{equation}
        \text{ERE}(\hat{P})=\dfrac{\sqrt{\operatorname{var}(\hat{P})}}{P}. \label{eq:ERE}
    \end{equation} If the estimator follows a normal distribution, the actual relative error, i.e., the absolute difference of the estimate and the real value divided by the latter, lies within $1.96\times\text{ERE}(\hat{P})$ with a probability of $95\%$. Note that, in general, if we drop the normality assumption, Chebyshev's inequality still guarantees that the actual relative error will be less than $4.47\times\text{ERE}(\hat{P})$, at least with a probability of $95\%$. Hence, enhancing standard MC requires feasibly reducing its variance.
    
    For instance, the central limit theorem (CLT) implies that the naive Monte-Carlo (NMC) estimator converges to a normal random variable. The variance of naive Monte-Carlo (NMC) is approximately equal to the sought small probability divided by the number of samples, hence the number of samples needed to achieve a predetermined accuracy is inversely proportional to the OP. For example, to estimate a probability $P=10^{-9}$ such the estimate is within a relative error of $5\%$ with a probability of $95\%$, one needs to generate $1.54 \times 10^{12}$ samples.
    
    Various performance metrics have been used in the rare event literature to investigate the effectiveness of an estimator \cite[Sec 10.1]{kroese2013handbook}. Among these criteria, a desired one is the \textit{bounded relative error}. If the estimate satisfies \begin{equation}
        \limsup_{\gamma_{\mathrm{th}}\to 0}\dfrac{\operatorname{var}(\hat{P})}{P^2}<\infty, \label{eq:BRE}
    \end{equation} the estimator is said to be of bounded relative error. As a result, the number of samples required to achieve a predetermined accuracy with a fixed probability is bounded as the threshold decreases to zero. Hence, if the event is made rarer by decreasing the threshold to zero, the computational cost will not blow up.

    \subsection{Importance Sampling Methods}

    We present four importance sampling methods. The first two of them can be viewed as special cases of what we call selection sampling. 

    Instead of using the nominal (original) distribution, say with the PDF $f_{\rv{X}}$, as in naive MC, importance sampling \cite{glassermanMonteCarlo2004} employs the fact that
    \begin{align*}
        \mathbb{P}[H(\rv{X})\leq\gamma_{\mathrm{th}}]
        &=\int_{H(\dv{x})\leq\gamma_{\mathrm{th}}} \dfrac{f_{\rv{X}}(\dv{x})}{f^*_{\rv{X}}(\dv{x})}f^*_{\rv{X}}(\dv{x})d\dv{x}\\
            &=\mathbb{E}_{f^*_{\rv{X}}(\dv{x})}\left[\mathbf{1}_{\{H(\dv{x})\leq\gamma_{\mathrm{th}}\}}\dfrac{f_{\rv{X}}(\dv{x})}{f^*_{\rv{X}}(\dv{x})}\right],
    \end{align*} and uses the importance sampling probability density function (IS PDF) $f^*_{\rv{X}}$ to reduce the variance\footnote{Note that, for the equation to be valid, it should hold that $\operatorname{supp}\left(\mathbf{1}_{\{H(\dv{x})\leq\gamma_{\mathrm{th}}\}}f_{\rv{X}}(\dv{x})\right)\subseteq \operatorname{supp}\left(\mathbf{1}_{\{H(\dv{x})\leq\gamma_{\mathrm{th}}\}}f^*_{\rv{X}}(\dv{x})\right)$.}. The symbol $\mathbf{1}_{\{.\}}$ denotes the indicator function. Hence, an importance sampling estimator can be written as \begin{equation}
        \hat{P}_N=\dfrac{1}{N} \sum_{n=1}^N \hat{\ell}_n,
    \end{equation}
    where $\hat{\ell}_n$ is the estimator applied to the $n^{\text{th}}$ sample, given by \begin{equation}
        \hat{\ell}=\mathbf{1}_{\{H(\rv{X})\leq\gamma_{\mathrm{th}}\}}\mathcal{L}(\rv{X}),
    \end{equation} with $\mathcal{L}(\rv{X})$ being the likelihood ratio, evaluated at the sample vector $\rv{X}$, \begin{equation*}
        \mathcal{L}(\rv{X})=\dfrac{f_{\rv{X}}(\rv{X})}{f^*_{\rv{X}}(\rv{X})}.
    \end{equation*} Since $\operatorname{var}(\hat{P}_N)=\dfrac{1}{N}\operatorname{var}(\hat{\ell})$, it is sufficient to study the variance of $\hat{\ell}$. Note that CLT practically guarantees the normal distribution of the IS estimator for a sufficiently large number of samples. 

    It can be demonstrated that the optimum IS PDF (in the sense of minimizing the variance) is given by \begin{equation}
        g^*_{\rv{X}}(\dv{x})=\dfrac{f_{\rv{X}}(\dv{x})}{P}\mathbf{1}_{\{H(\dv{x})\leq \gamma_{\mathrm{th}}\}}. \label{eq:opt-density}
    \end{equation} In fact, the variance of the importance sampling estimator with the latter PDF is null. However, we do not have access to it as it depends on the constant $P$ whose calculation is the ultimate goal of the problem. In spite of that, it is still known up to a multiplicative constant, the fact that will be utilized later in the cross-entropy method.

        \subsubsection{Selection Sampling}

            Ben Rached et al. employ in \cite{benrachedSumOrderStatistics2018}, what we shall call ``selection sampling'', in order to estimate the OP of SIMO with the GSC combining scheme and under several fading regimes. As they point out, the idea is to find an event $\mathcal{S}_1$ implied by $\mathcal{S}$, i.e., $\mathcal{S}\subset\mathcal{S}_1$, so that (1) the probability of $\mathcal{S}_1$ is available in closed-form, and (2) $\mathcal{S}_1$ is close to $\mathcal{S}$. We can actually add a third condition: sampling from the distribution truncated to $\mathcal{S}_1$ should be feasible. Let $\ell_1=\mathbb{P}(\mathcal{S}_1)$. The method utilizes Bayes law $$P=\ell_1 \mathbb{P}(\mathcal{S}|\mathcal{S}_1),$$ and approximates the probability in the RHS, i.e., simulating from the truncated distribution over $\mathcal{S}_1$, \begin{equation}
                \hat{\ell}=\ell_1 \mathbf{1}_{\{H(\rv{X})\leq \gamma_{\mathrm{th}}\}}.
            \end{equation}  Viewed as an importance sampling method, its IS PDF is given by \begin{equation}
                f^*_{\rv{X}}(x_1,\ldots,x_M)=\dfrac{f_{\rv{X}}(x_1,\ldots,x_M)}{\ell_1}\mathbf{1}_{\{\dv{x}\in \mathcal{S}_1\}}.
            \end{equation}  

            It can be easily demonstrated that the variance of the estimator is $\ell_1 P-P^2$. Clearly, $\ell_1\geq P$. Expectedly, the ``closer'' $\mathcal{S}_1$ is to $\mathcal{S}$ (in the sense of smaller $\ell_1$), the larger is the amount of variance reduction. 

            \paragraph{Universal Importance Sampling}
    
                Assuming the gains are independent, the authors \cite{benrachedSumOrderStatistics2018} suggest a ``universal'' importance sampling technique given by \begin{equation}
                    \mathcal{S}_1=\left\{\max_k |\rs{h}_{k}|^2 \leq \gamma_{\mathrm{th}}\right\}=\left\{\rs{X}^{(1)} \leq \gamma_{\mathrm{th}}\right\}.
                \end{equation} In our case, we have \begin{equation}
                    \ell_1=\prod_{i=1}^MF_{\chi_2^2(2|\mu_i|^2)}(2\gamma_{\mathrm{th}}),
                \end{equation} where $F_{\chi_{2}^2(2|\mu_i|^2)}$ denotes the CDF of $\chi_{2}^2(2|\mu_i|^2)$.  To sample from the truncated distribution to $\mathcal{S}_1$, the $M$ components of the random vector are sampled independently from the univariate distribution $\frac{1}{2}\chi_2^2(2|\mu_i|^2)$ truncated at $\gamma_{\mathrm{th}}$. The inverse transform method \cite[Sec 3.1.1]{kroese2013handbook} is used. Start by generating uniform random variables $\rs{u}_i\sim\mathcal{U}(0,1)$, and apply the inverse non-central chi-square CDF \begin{equation}
                    \rs{X}_i = \frac{1}{2}F^{-1}_{\chi_2^2(2|\mu_i|^2)}(\kappa_i\rs{u}_i),
                \end{equation} where $\kappa_i=F_{\chi_2^2(2|\mu_i|^2)}(2\gamma_{\mathrm{th}})$. This importance sampling method is called universal because it can be actually applied to any fading regime with independent gains.

                The authors \cite{benrachedSumOrderStatistics2018} prove that, for independent $\rs{X}_i, \; i=1,\ldots,M$, if $$\dfrac{\mathbb{P}(\rs{X}_1\leq\gamma_{\mathrm{th}})}{\mathbb{P}\left(\dfrac{\rs{X}_1}{m}\leq\gamma_{\mathrm{th}}\right)}=\mathcal{O}(1)$$ when $\gamma_{\mathrm{th}}\to 0$, the bounded relative error property holds. We can easily verify that the assumption holds in the case of Rician fading.
    
            \paragraph{Partition Importance Sampling}
    
                For our particular problem setting, we can utilize the fact that the sum of non-central chi-squares is again a non-central chi-square to construct a selection sampling method. To the best of our knowledge, this is a novel method. Note that the selection sampling framework (1) does not guarantee the existence of some event $\mathcal{S}_1$ implied by $\mathcal{S}$, with probability available in closed-form, and simultaneously meaningfully close to the outage probability, nor (2) does it provide an efficient sampler in case of existence of such a set. For the sake of brevity, we will first assume that $M=2m$ and $\mu_i=\mu$ for $i=1,2,\ldots,M$.
                
                Consider the event \begin{equation}
                    \mathcal{S}_2=\left\{ \sum_{k=1}^m \rs{X}_{k} \leq \gamma_{\mathrm{th}} \text{ and } \sum_{k=m+1}^M \rs{X}_{k} \leq \gamma_{\mathrm{th}}\right\},
                \end{equation} with probability $\ell_2=\mathbb{P}(\mathcal{S}_2)$. So instead of just forcing every power gain to be less than the threshold, two non-overlapping sums are. Clearly, $\mathcal{S}_2$ implies $\mathcal{S}_1$, the fact that reduces the variance further. Hence, the bounded relative error result still holds. Due to independence and equivariance, we can write $\ell_2$ in closed form \begin{equation}
                    \ell_2=\left[F_{\chi_{2m}^2(2m|\mu|^2)}(2\gamma_{\mathrm{th}})\right]^2.
                \end{equation}
                
                To sample from the truncated distribution to $\mathcal{S}_2$, which is not a standard one, an acceptance-rejection algorithm \cite[Sec 3.1.5]{kroese2013handbook} is used. Given the event $\mathcal{S}_2$ occurs, the set of the first $m$ random variables is independent from the second. Hence the sets can be sampled independently.
                
                Given $\mathcal{S}_2$ occurs, the first $m$ random variables follow the truncated distribution with the joint PDF \begin{equation}
                    \begin{aligned}
                        &f(x_1,\ldots,x_m):=f_{\rs{X}_1,\ldots,\rs{X}_m|\sum_{i=1}^m\rs{X}_i\leq \gamma_{\mathrm{th}}}(x_1,\ldots,x_m)\\
                        &=\dfrac{e^{-m|\mu|^2}}{F_{\chi_{2m}^2(2m|\mu|^2)}(2\gamma_{\mathrm{th}})}\displaystyle\prod_{i=1}^m\exp(- x_i)I_0\left(2|\mu|\sqrt{x_i}\right), \text{ for} \\ &\sum_{i=1}^m x_i\leq \gamma_{\mathrm{th}}, \;  \text{and}\; x_i \geq 0, \text{ for } i=1,\ldots,m. \label{eq:S2-m-PDF}
                    \end{aligned} 
                \end{equation} The proposal distribution is the uniform distributions over the simplices $\{\sum_{i=1}^m x_i\leq \gamma_{\mathrm{th}},\; x_i\geq 0\}$. The proposal density is given by \begin{equation}
                    \begin{aligned}
                        &g(x_1,\ldots,x_m)=\dfrac{m!}{\gamma_{\mathrm{th}}^m}, \text{ for} \\ &\sum_{i=1}^m x_i\leq \gamma_{\mathrm{th}}, \;  \text{and}\; x_i \geq 0, \text{ for } i=1,\ldots,m.
                    \end{aligned}
                \end{equation} Bounding the ratio of the densities is closely related to the problem of finding the mode of the non-central chi-square distribution. We consider two cases based on the size of the mean.
                
                \underline{Case 1:} If $|\mu|\leq 1$, the mode of the one-dimensional distribution is zero \cite{aubelUnimodalityNonCentral2000}, and we have the following bound: \begin{equation}
                    \dfrac{f(x_1,\ldots,x_m)}{g(x_1,\ldots,x_m)}\leq\dfrac{ \gamma_{\mathrm{th}}^m\exp(-m|\mu|^2)}{m! F_{\chi_{2m}^2(2m|\mu|^2)}(2\gamma_{\mathrm{th}})}=:M_\ell.
                \end{equation}  
                
                \underline{Case 2:} If $|\mu|> 1$, the mode of the one-dimensional distribution is greater than zero. In general we need to bound the maximum of the PDF. To that end, we provide the following two propositions to characterize some properties of the non-central chi-square PDF.

                \begin{proposition} \label{prop:mode-bound}
                    Consider the distribution $\chi_2^2(\lambda)$, with $\lambda>2$. Its mode $Z_\lambda$ satisfies \begin{equation}
                        \lambda-2\leq Z_\lambda \leq \lambda - 1 \label{eq:mode-bound}
                    \end{equation}
                \end{proposition}
                
                \begin{proof}
                    See Appendix \ref{sec:proof-append-prop-1}.
                \end{proof}

                \begin{proposition} \label{prop:pdf-bound}
                    Consider the distribution $\chi_2^2(\lambda)$, with $\lambda>2$. Let $x_0$ be any number in the interval $(\lambda-2,\lambda-1)$ and $Z_\lambda$ be the distribution's mode. Then its PDF $f_{\chi_2^2(\lambda)}$ satisfies \begin{equation}
                        \dfrac{f_{\chi_2^2(\lambda)}(Z_\lambda)}{f_{\chi_2^2(\lambda)}(x_0)}\leq  \max\left\{C_1(\lambda,x_0),C_2(\lambda,x_0)\right\} \label{eq:PDF-bound}
                    \end{equation} 
                    where \begin{equation}
                        C_1(\lambda,x_0)=\exp\left[(\lambda-1-x_0)\left(-\dfrac{1}{2}+\dfrac{\sqrt{1+4\lambda x_0}-1}{4x_0}\right)\right],\label{eq:C1}\end{equation} and \begin{equation}
                        C_2(\lambda,x_0)=\exp\left[(x_0-\lambda+2)\left(\dfrac{1}{2}-\dfrac{\sqrt{1+\lambda x_0}-1}{2x_0}\right)\right].\label{eq:C2}
                    \end{equation}
                \end{proposition}

                \begin{proof}
                    See Appendix \ref{sec:proof-append-prop-2}.
                \end{proof}
                
                If the threshold $\gamma_{\mathrm{th}}$ is sufficiently small, i.e., $2\gamma_{\mathrm{th}}$ is less than the mode, the maximum over the interval $[0,2\gamma_{\mathrm{th}}]$ is attained at $2\gamma_{\mathrm{th}}$. Hence we 
                consider two subcases. Firstly, if $\gamma_{\mathrm{th}}$ is less than or equal to $|\mu|^2-1$, hence less than or equal to the mode, as proved in \eqref{eq:mode-bound}, we have \begin{equation}
                    \dfrac{f(x_1,\ldots,x_m)}{g(x_1,\ldots,x_m)}\leq\dfrac{\left[2\gamma_{\mathrm{th}}f_{\chi_2^2(2|\mu|^2)}(2\gamma_{\mathrm{th}})\right]^m}{m!F_{\chi_{2m}^2(2m|\mu|^2)}(2\gamma_{\mathrm{th}})}=:M_\ell. \label{eq:Ml-large-mu-small-gamma}
                \end{equation} Otherwise, if $\gamma_{\mathrm{th}}>|\mu|^2-1$, we use \eqref{eq:PDF-bound}. We chose $x_0=\lambda-2+3/(2\lambda)$. Thus we have the following bound \begin{equation}
                    \dfrac{f(x_1,\ldots,x_m)}{g(x_1,\ldots,x_m)}\leq\dfrac{\left[2\gamma_{\mathrm{th}}Cf_{\chi_2^2(2|\mu|^2)}(A_\mu)\right]^m}{m!F_{\chi_{2m}^2(2m|\mu|^2)}(2\gamma_{\mathrm{th}})}=:M_\ell,
                \end{equation} where $$A_\mu=2|\mu|^2-2+\frac{3}{4|\mu|^2},$$ and \begin{equation}
                    C=\max\{C_1(2|\mu|^2,A_\mu),C_2(2|\mu|^2,A_\mu)\}
                \end{equation} 
                
                The acceptance-rejection algorithm generates a random vector $\rv{U}$ from $g$ and a random variable $\rs{V}\sim\mathcal{U}(0,1)$, and accepts the vector $\rv{U}$ as $[\rs{x}_1,\ldots,\rs{x}_m]$ from $f$ whenever \[\rs{V}\leq \dfrac{f(\rv{U})}{M_\ell\; g(\rv{U})}.\] Otherwise, a new couple $(\rs{V},\rv{U})$ is generated and the process is repeated. The number of trials till acceptance follows a geometric distribution of parameter $\frac{1}{M_\ell}$, so the expected number of trials is $M_\ell$. This allows us to store the expected number of needed random vectors and numbers in a buffer before running through the loop. This buffering significantly speeds MATLAB implementation.

                The second set of random variables is generated similarly. The two arrays are then concatenated to generate a random vector $\rv{X}$ from the original distribution truncated to $\mathcal{S}_2$.
                
                Unfortunately, for a fixed $\gamma_{\mathrm{th}}$, as $|\mu|$ increases, the bound $M_\ell$ increases without limit and the acceptance-rejection method deteriorates. This is shown in the following proposition. 

                \begin{proposition} \label{prop:Mell-explosion}
                    Fix $M$, $m$, and $\gamma_{\mathrm{th}}$. As $|\mu|$ goes to infinity, $\ln(M_\ell)=\mathcal{O}(|\mu|)$.
                \end{proposition}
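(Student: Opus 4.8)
The plan is to reduce the statement to the classical large-argument asymptotics $I_0(z)=(1+o(1))\,e^{z}/\sqrt{2\pi z}$ of the modified Bessel function. Since $M$, $m$, and $\gamma_{\mathrm{th}}$ are fixed while $|\mu|\to\infty$, we eventually have $\gamma_{\mathrm{th}}<2|\mu|^{2}-2$, so from that point on $M_\ell$ is the quantity in \eqref{eq:Ml-large-mu-small-gamma},
\[
M_\ell=\frac{\big[2\gamma_{\mathrm{th}}\,f_{\chi_2^2(2|\mu|^2)}(2\gamma_{\mathrm{th}})\big]^{m}}{m!\,F_{\chi_{2m}^2(2m|\mu|^2)}(2\gamma_{\mathrm{th}})}.
\]
The apparent difficulty, which is really the heart of the matter, is that both the numerator factor $f_{\chi_2^2(2|\mu|^2)}(2\gamma_{\mathrm{th}})$ and the normalizing CDF $F_{\chi_{2m}^2(2m|\mu|^2)}(2\gamma_{\mathrm{th}})$ decay like $e^{-m|\mu|^{2}}$ as $|\mu|$ grows --- this is the Gaussian ``offset'' decay of a non-central chi-square evaluated near the origin --- so bounding either one in isolation would only give $\ln M_\ell\asymp\pm m|\mu|^{2}$. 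The proposition asserts that these quadratic exponents cancel exactly, leaving only the linear-in-$|\mu|$ correction carried by $I_0$.

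To expose the cancellation cleanly I would bound the normalizer from below by independence: the event that each of the $m$ i.i.d.\ summands is at most $\gamma_{\mathrm{th}}/m$ is contained in $\{\sum_{i=1}^{m}\rs{X}_i\le\gamma_{\mathrm{th}}\}$, hence $F_{\chi_{2m}^2(2m|\mu|^2)}(2\gamma_{\mathrm{th}})\ge\big[F_{\chi_2^2(2|\mu|^2)}(2\gamma_{\mathrm{th}}/m)\big]^{m}$, and therefore
\[
M_\ell\le\frac{1}{m!}\left[\frac{2\gamma_{\mathrm{th}}\,f_{\chi_2^2(2|\mu|^2)}(2\gamma_{\mathrm{th}})}{F_{\chi_2^2(2|\mu|^2)}(2\gamma_{\mathrm{th}}/m)}\right]^{m}.
\]
It then suffices to show that the logarithm of the bracketed one-dimensional ratio is $\mathcal{O}(|\mu|)$. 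Using $f_{\chi_2^2(2|\mu|^2)}(t)=\tfrac12 e^{-t/2-|\mu|^2}I_0(|\mu|\sqrt{2t})$, the numerator gives $\ln\!\big(2\gamma_{\mathrm{th}}f_{\chi_2^2(2|\mu|^2)}(2\gamma_{\mathrm{th}})\big)=-|\mu|^{2}+2\sqrt{\gamma_{\mathrm{th}}}\,|\mu|+\mathcal{O}(\ln|\mu|)$ by the $I_0$ asymptotics. For the denominator, writing $F_{\chi_2^2(2|\mu|^2)}(2\gamma_{\mathrm{th}}/m)=\int_{0}^{\gamma_{\mathrm{th}}/m}e^{-|\mu|^2-x}I_0(2|\mu|\sqrt{x})\,dx$, I would restrict the integral to $[\gamma_{\mathrm{th}}/(2m),\gamma_{\mathrm{th}}/m]$ and use the monotonicity of $I_0$ to obtain the lower bound $\tfrac{\gamma_{\mathrm{th}}}{2m}e^{-|\mu|^2-\gamma_{\mathrm{th}}/m}I_0\!\big(|\mu|\sqrt{2\gamma_{\mathrm{th}}/m}\big)$, whose logarithm is $-|\mu|^{2}+\sqrt{2\gamma_{\mathrm{th}}/m}\,|\mu|+\mathcal{O}(\ln|\mu|)$. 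Subtracting, the $-|\mu|^{2}$ terms cancel and the bracketed logarithm is at most $\big(2\sqrt{\gamma_{\mathrm{th}}}-\sqrt{2\gamma_{\mathrm{th}}/m}\big)|\mu|+\mathcal{O}(\ln|\mu|)=\mathcal{O}(|\mu|)$.

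Putting the pieces together gives $\ln M_\ell\le -\ln m!+m\cdot\mathcal{O}(|\mu|)=\mathcal{O}(|\mu|)$; combined with the trivial lower bound $\ln M_\ell\ge0$ (an acceptance--rejection constant always satisfies $M_\ell\ge1$, since $f/g$ has $g$-mean one over the simplex), this yields $\ln M_\ell=\mathcal{O}(|\mu|)$. The hard part is the quadratic cancellation described above: everything hinges on matching the $e^{-m|\mu|^{2}}$ rate of the numerator density with that of the normalizer, and routing the normalizer through the product lower bound $\big[F_{\chi_2^2(2|\mu|^2)}(2\gamma_{\mathrm{th}}/m)\big]^{m}$ is what makes this matching visible while sidestepping any multidimensional Laplace / small-ball analysis, reducing the whole estimate to the single fact $\ln I_0(cz)=cz+\mathcal{O}(\ln z)$ as $z\to\infty$.
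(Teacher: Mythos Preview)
Your argument is correct, and it takes a genuinely different route from the paper's proof. The paper attacks the denominator $F_{\chi_{2m}^2(2m|\mu|^2)}(2\gamma_{\mathrm{th}})$ head-on via the Marcum-$Q$ representation and Temme's large-noncentrality asymptotic $Q_\nu(a,b)\sim 1-\tfrac12(b/a)^{\nu-1/2}\operatorname{erfc}\!\big((a-b)/\sqrt{2}\big)$, then expands $\operatorname{erfc}$; combining this with the same $I_0$ asymptotic you use for the numerator yields an \emph{exact} equivalent
\[
M_\ell\;\sim\;C_{m,\gamma_{\mathrm{th}}}\,|\mu|^{(m+1)/2}\exp\!\big(2\sqrt{\gamma_{\mathrm{th}}}\,(m-\sqrt{m})\,|\mu|\big),
\]
from which $\ln M_\ell=\mathcal{O}(|\mu|)$ is immediate. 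Your approach instead bypasses any multivariate CDF asymptotics: the product lower bound $F_{\chi_{2m}^2(2m|\mu|^2)}(2\gamma_{\mathrm{th}})\ge\big[F_{\chi_2^2(2|\mu|^2)}(2\gamma_{\mathrm{th}}/m)\big]^{m}$ followed by the crude half-interval estimate reduces everything to the single input $\ln I_0(z)=z+\mathcal{O}(\ln z)$, and the $-|\mu|^{2}$ terms cancel by construction. What the paper's route buys is the sharp exponential rate $2\sqrt{\gamma_{\mathrm{th}}}(m-\sqrt{m})$ (your bound gives the larger constant $2m\sqrt{\gamma_{\mathrm{th}}}-\sqrt{2m\gamma_{\mathrm{th}}}$); what your route buys is that it is entirely self-contained and avoids the Marcum-$Q$/$\operatorname{erfc}$ machinery, which is arguably preferable given that the proposition only asks for the order, not the constant.
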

                \begin{proof}
                    See Appendix \ref{proof-append-mean}.
                \end{proof}

                This growth reflects an increasing mismatch between the uniform distribution over the simplex and the nominal distribution as the line-of-sight component increases. This opens the door to the consideration of other proposal distributions in the future.

                The whole acceptance-rejection method is summarized in Algorithm \ref{alg:cap}.
                \begin{algorithm}
                    \caption{Generation of a sample vector $[\rs{X}_1,\ldots,\rs{X}_m]$ from \eqref{eq:S2-m-PDF}}
                    \label{alg:cap}
                    \begin{algorithmic}[1]
                        \REQUIRE $\mu$, $m$, $\gamma_{\mathrm{th}}$
                        \ENSURE $[\rs{X}_1,\ldots,\rs{X}_m]$
                        \IF{$|\mu|\leq 1$}
                            \STATE $M_\ell \gets \dfrac{ \gamma_{\mathrm{th}}^m\exp(-m|\mu|^2)}{m! F_{\chi_{2m}^2(2m|\mu|^2)}(2\gamma_{\mathrm{th}})}$
                        \ELSE 
                            \IF{$\gamma_{\mathrm{th}}\leq |\mu|^2-1$}
                                \STATE $M_\ell \gets \dfrac{\left[2\gamma_{\mathrm{th}} f_{\chi_2^2(2|\mu|^2)}(2\gamma_{\mathrm{th}})\right]^m}{m!F_{\chi_{2m}^2(2m|\mu|^2)}(2\gamma_{\mathrm{th}})}$
                            \ELSE
                                \STATE $A_\mu \gets 2|\mu|^2-2+\frac{3}{4|\mu|^2}$
                                \STATE $C \gets \max\{C_1(2|\mu|^2,A_\mu),C_2(2|\mu|^2,A_\mu)\}$, $C_1$ and $C_2$ are given by \eqref{eq:C1} and \eqref{eq:C2}.
                                \STATE $M_\ell \gets \dfrac{\left[2\gamma_{\mathrm{th}}Cf_{\chi_2^2(2|\mu|^2)}(A_\mu)\right]^m}{m!F_{\chi_{2m}^2(2m|\mu|^2)}(2\gamma_{\mathrm{th}})}$
                            \ENDIF
                        \ENDIF
                        \REPEAT
                            \STATE Generate $\rv{U}$ from the uniform distribution over $\left\{u_i \geq 0, \sum_{i=1}^m u_i \leq 1\right\}$ using \cite[Alg 3.23]{kroese2013handbook}.
                            \STATE $\rv{U}\gets \gamma_{\mathrm{th}}\rv{U}$. 
                            \STATE Generate $\rs{V}$ from $\mathcal{U}(0,1)$.
                        \UNTIL{$\rs{V}\leq \dfrac{f(\rv{U})}{M_\ell g(\rv{U})}$}
                        \STATE $[\rs{X}_1,\ldots,\rs{X}_m] \gets \rv{U}$
                        \RETURN $[\rs{X}_1,\ldots,\rs{X}_m]$
                    \end{algorithmic}
                \end{algorithm}

                \begin{remark}[General $m$ and $\dv{\mu}$]
                    In general, for any $m<M$ and possibly distinct $\mu_i,\;i=1,\ldots,M$, consider the integer division of $M$ by $m$, writing $M=mq+r,\;q,r\in\mathcal{N}\;,0\leq r < m$. We can employ the event given in \eqref{eq:PIS-general}.
                    \begin{figure*}[!t]
                        \begin{equation}
                            \mathcal{S}_2=\left\{\sum_{k=1+tm}^{(t+1)m}|\rs{h}_{k}|^2\leq \gamma_{\mathrm{th}},\; t=0,\ldots,q-1, \text{ and } \sum_{k=M-r+1}^M|\rs{h}_{k}|^2\leq \gamma_{\mathrm{th}}\right\} \label{eq:PIS-general}
                        \end{equation}
                        \hrulefill
                    \end{figure*} Each one of the $(q+1)$ (or possibly $q$) vectors is generated independently. After that, we proceed similarly. Note that the probability $\ell_2$ can be written as \begin{equation}
                        \ell_2=F_{\chi_{2r}^2(2\delta_q)}(2\gamma_{\mathrm{th}})\prod_{t=0}^{q-1} F_{\chi_{2m}^2(2\delta_t)}(2\gamma_{\mathrm{th}}),
                    \end{equation} where \[\delta_t=\sum_{i=mt+1}^{mt+m}|\mu_i|^2,\] for $t=0,\ldots,q-1$, and \[\delta_q=\sum_{i=mq+1}^{M}|\mu_i|^2.\]
                \end{remark}

                \begin{remark}[UIS vs PIS]
                    To quantify the variance reduction gain in selection sampling when UIS is replaced with PIS, we can demonstrate that, for fixed $M$, $m$, and $\|\dv{\mu}\|$, we have \begin{equation}
                    \frac{\ell_2}{\ell_1} \underset{\gamma_{\mathrm{th}}\to 0^+}{\sim} \frac{1}{r!(m!)^q}
                \end{equation} The proof is quite simple and follows from the fact that \[F_{\chi_{2M}^2(2\|\dv{\mu}\|^2)}(2\gamma_{\mathrm{th}}) \underset{\gamma_{\mathrm{th}}\to 0^+}{\sim} \dfrac{e^{-\|\dv{\mu}\|^2}\gamma_{\mathrm{th}}^M}{\Gamma(M+1)}.\]
                \end{remark}

                The following proposition specifies the optimum partition in the case of non-identical means, i.e., the partition that results in the largest variance reduction.                
                
                \begin{proposition} \label{prop:opt-ell2}
                    Assume, without loss of generality, that the means are sorted as \[|\mu_1|\leq |\mu_2| \leq \ldots \leq |\mu_M|,\] then the optimal partition, i.e., the partition that minimizes $\ell_2$, and hence the corresponding variance, is given by \eqref{eq:PIS-opt}.
                \end{proposition}
                \begin{figure*}[!t]
                    {\begin{equation}
                        \mathcal{S}_2=\left\{ \sum_{1}^r|\rs{h}_{k}|^2\leq \gamma_{\mathrm{th}}  \text{ and } \sum_{k=r+tm+1}^{r+(t+1)m}|\rs{h}_{k}|^2\leq \gamma_{\mathrm{th}},\; t=0,\ldots,q-1,   \right\} \label{eq:PIS-opt}
                    \end{equation}}
                    \hrulefill
                \end{figure*}
                \begin{proof}
                    See Appendix \ref{sec:proof-opt-ell2}.
                \end{proof}

        \subsubsection{Exponential Twisting}

            Ben Rached et al. utilize approximate exponential twisting techniques in \cite{benrachedEfficientImportanceSampling2021} to evaluate the rare event of a sum of nonnegative independent and identically distributed (i.i.d.) random variables being less than a threshold. 
            
            The authors modify the original exponential twisting method as presented in Ridder and Rubinstein \cite{ridderMinimumCrossEntropy2007}. The latter authors optimize the Kullback–Leibler (KL) distance between the original PDF and another one from a family of IS PDFs under which the rare event is no longer rare. In particular, the expected value of the sum under the proposal distribution is the threshold itself. Denoting by $f_{\rv{X}}^*(\dv{x})$ the IS PDF, the analytical solution is given by \[f_{\rv{X}}^*(\dv{x})=\frac{f_{\rv{X}}(\dv{x}) \exp \left(\theta^* \sum_{i=1}^M x_i\right)}{\mathbb{E}_{f_{\rv{X}}}\left[\exp \left(\theta^* \sum_{i=1}^M \rs{X}_i\right)\right]},\] where $\theta^*$ solves the equation \[\frac{\mathbb{E}_{f_{\rv{X}}}\left[\sum_{i=1}^M \rs{X}_i \exp \left(\theta^* \sum_{i=1}^M \rs{X}_i\right)\right]}{\mathbb{E}_{f_{\rv{X}}}\left[\exp \left(\theta^* \sum_{i=1}^M \rs{X}_i\right)\right]}=\gamma_{\mathrm{th}}.\] Utilizing the independence and identical distribution, it can be shown that \(f_{\rv{X}}^*(\dv{x})=\prod_{i=1}^M f_{\rs{X}}^*\left(x_i\right)\), where \begin{equation}
                f_{\rs{X}}^*(x)=\frac{f_{\rs{X}}(x) \exp \left(\theta^* x\right)}{M_{\rs{X}}\left(\theta^*\right)},
            \end{equation} and the root $\theta^*$ solves the equation \begin{equation}
                \frac{M_{\rs{X}}^{\prime}\left(\theta^*\right)}{M_{\rs{X}}\left(\theta^*\right)}=\frac{\gamma_{\mathrm{th}}}{M}. \label{eq:num-root}
            \end{equation}
            
            For probability densities $f_{\rs{X}}(x)$ that converge to a non-zero value (as our non-central chi-square) as $x$ goes to zero: \begin{equation*}
                \lim_{x\to 0}f_{\rs{X}}(x)=b\in \mathbb{R}_{>0},
            \end{equation*} Ben Rached et al. \cite{benrachedEfficientImportanceSampling2021} suggest the following approximation to $f^*_{\rs{X}}(x)$: \begin{equation}
                \tilde{f}_{\rs{X}}(x)=\dfrac{M}{\gamma_{\mathrm{th}}}\exp\left(-\dfrac{M}{\gamma_{\mathrm{th}}}x\right),
            \end{equation} i.e., an exponential distribution with rate $\frac{M}{\gamma_{\mathrm{th}}}$. The approximation circumvents the need to numerically solve the non-linear equation \eqref{eq:num-root}. Moreover, in the case of other fading regimes, the MGF may be unavailable in closed-form, so the latter approximation overcomes that obstacle.
            
            What is different in our case is that we do not consider the full sum, but partial sums of order statistics. However, the method may still work. A heuristic justification may be as follows: if the sum (MRC) falling under the threshold is no longer rare, then the smaller partial sum (GSC/MRC) falling under the same threshold must also be non-rare. 
            
            In our particular case of Rician fading, and for a sample vector $\rv{X}$, the estimator is given by \begin{equation}
                \begin{aligned}
                \hat{\ell}_{\mathrm{ET}}&=\mathcal{L}_{\mathrm{ET}}(\rv{X})\mathlarger{\mathbf{1}}_{\left\{H(\rv{X})\leq \gamma_{\mathrm{th}}\right\}},
                \end{aligned}
            \end{equation} where \begin{equation*}
                \begin{aligned}
                    \mathcal{L}_{\mathrm{ET}}(\rv{X})&=\dfrac{\gamma_{\mathrm{th}}^M e^{-\|\dv{\mu}\|^2}}{M^M}\\
                    &\times\exp\left(\dfrac{M-\gamma_{\mathrm{th}}}{\gamma_{\mathrm{th}}}\sum_{i=1}^M \rs{X}_i\right)\prod_{i=1}^M I_0\left(2|\mu_i|\sqrt{\rs{X}_i}\right).
                \end{aligned}
            \end{equation*}  The efficiency of this estimator is given by the following proposition.
            
            \begin{proposition} \label{prop:ET-BRE}
                Fix $M$, $m$, and $\dv{\mu}$. The approximate exponential twisting method is of bounded relative error, i.e., \begin{equation}
                    \limsup_{\gamma_{\mathrm{th}}\to 0} \dfrac{\operatorname{var}(\hat{\ell}_{\mathrm{ET}})}{P^2} < \infty.
                \end{equation}
            \end{proposition}

            \begin{proof}
                See Appendix \ref{sec:proof-append-1}.
            \end{proof}

            Again, this property demonstrates that the relative error is asymptotically bounded as the threshold approaches zero, the fact which assures that the number of samples needed to achieve a preset accuracy with a fixed probability does not grow with the decrease in probability.

        \subsubsection{Cross-Entropy}

            For this method, we consider i.i.d. gains, i.e., all means are assumed to be equal: \[\mu_i=\mu, \;i=1,\ldots,M.\] Cross-entropy variance reduction method is based on minimizing the KL distance between the optimum IS PDF (as in \eqref{eq:opt-density}) and the proposal one \cite{deBoerTutorialCrossEntropy2005}. The latter is usually constrained to a family of distributions parametrized by a finite-dimensional vector and usually containing the nominal PDF. Denote by $f(\dv{x};\dv{v})$ the chosen family of distributions, and by $\dv{u}$ the parameter corresponding to the nominal distribution, i.e., $f(\dv{x};\dv{u})=f_{\rv{X}}(\dv{x})$. It can be easily shown that the minimization of the mentioned KL-distance is equivalent to the following optimization problem \begin{equation}
                \max_{\dv{v}} \mathbb{E}_{\dv{u}} \ln f(\rv{X};\dv{v}) \mathbf{1}_{\{H(\rv{X})\leq \gamma_{\mathrm{th}}\}}, \label{eq:opt}
            \end{equation} where $\mathbb{E}_{\dv{v}}$ denotes the expectation over the random vector $\rv{X}$ with the PDF $f(\dv{x},\dv{v})$. A naive MC estimator could be used for the optimization, but we would face the same issue of the original problem: the event is rare, so the rejection rate is high. However, if the threshold $\gamma_{\mathrm{th}}$ were sufficiently large, the problem could be solved. Now applying the same trick of importance sampling, \eqref{eq:opt} can be rewritten as \begin{equation}
                \max_{\dv{v}} \mathbb{E}_{\dv{w}} \ln f(\rv{X};\dv{v}) \mathcal{L}(\rv{X};\dv{u},\dv{w}) \mathbf{1}_{\{H(\rv{X})\leq \gamma_{\mathrm{th}}\}}, \label{eq:opt-2}
            \end{equation} where $\mathcal{L}(\rv{X};\dv{u},\dv{w})=\dfrac{f(\rv{X};\dv{u})}{f({\rv{X};\dv{w}})}$. If $\dv{w}$ is properly chosen, the outage event will become no longer rare and the optimization is doable. Again, this is the raison d'\^{e}tre of importance sampling. 

            To overcome the mentioned obstacles, the method utilizes a sequence of optimization problems with decreasing thresholds. The idea is to start from a threshold that will produce a moderately small probability. In particular, $N_0$ samples are generated according to the nominal distribution, and the $\alpha$-th quantile of the random variable $H(\rv{X})$ is estimated. Usually, $\alpha$ is chosen to be $10\%$. The quantile is assigned to the first auxiliary threshold, $\hat{\gamma}_1$. Equation \eqref{eq:opt} is numerically solved using the generated samples and $\hat{\gamma}_1$ in place of $\gamma_{\mathrm{th}}$. Denote by $\dv{v}_1$ the maximizer of the latter equation. After that, a new set of $N_0$ vectors is sampled from $f(\dv{x},\dv{v}_1)$, and the auxiliary threshold $\hat{\gamma}_2$ is assigned the new $\alpha$-th quantile. If it is less than the primary threshold $\gamma_{\mathrm{th}}$, this implies that the outage event is no longer rare and the distribution can be used for importance sampling. Otherwise, optimization \eqref{eq:opt-2} is solved numerically with the expectation over $f(\dv{x},\dv{v}_1)$ and the auxiliary threshold $\hat{\gamma}_2$. The optimization and threshold adjustment are repeated until the auxiliary threshold is less than the target one, in which case, the auxiliary threshold is reset to the nominal one and a final optimization yields the parameter vector $\dv{v}_T$ that will be used for importance sampling.

            In our particular case, we select the family of independent scaled identical non-central chi-squares of two degrees of freedom, where the scale and the non-centrality parameters are unknown: $v_1 \chi_2^2(v_2)$, whose PDF is given by \begin{equation}
                f(\dv{x},\dv{v})=\prod_{i=1}^M \dfrac{e^{-v_2/2}}{2v_1}\exp\left(-\dfrac{x_i}{2v_1}\right)I_0\left(\sqrt{\dfrac{v_2}{v_1}x_i}\right)
            \end{equation} Evidently, the couple $\dv{v}_0=\dv{u}=[\frac{1}{2},2|\mu|^2]$ corresponds to the nominal distribution. The method is illustrated in Algorithm \ref{alg:CE}.

            \begin{algorithm}
                    \caption{Cross Entropy Estimation of P as in \cite{deBoerTutorialCrossEntropy2005}}
                    \label{alg:CE}
                    \begin{algorithmic}[1]
                        \REQUIRE $m$, $M$, $\mu$, $\gamma_{\mathrm{th}}$, $N$, $N_0$, $\alpha$
                        \ENSURE $\hat{P}_N$
                        \STATE Generate $\rv{X}_1, \rv{X}_2, \ldots, \rv{X}_{N_0}$ with independent components and marginal PDFs $f(\dv{x},\dv{v}_0)$, $\dv{v}_0=\dv{u}=[\frac{1}{2},2|\mu|^2]$.
                        \STATE Sort the corresponding partial sums $\rs{H}^{(1)}\leq \rs{H}^{(2)} \leq \ldots \leq \rs{H}^{(N_0)}$.
                        \STATE Compute $\hat{\gamma}_1=\rs{H}^{(\lfloor \alpha N_0 \rfloor)}$.
                        \STATE Set $t=1$.
                        \WHILE{$\hat{\gamma}_t\geq \gamma_{\mathrm{th}}$}
                            \STATE Solve numerically to get $\dv{v}_t$ \begin{equation}
                                \max_{\dv{v}} \sum_{n=1}^{N_0} \ln f(\rv{X}_n;\dv{v}) \mathcal{L}(\rv{X}_n;\dv{v}_0,\dv{v}_{t-1}) \mathbf{1}_{\{H(\rv{X}_n)\leq \hat{\gamma}_t\}} \label{eq:opt-num}
                            \end{equation} 
                            \STATE Generate $\rv{X}_1, \rv{X}_2, \ldots, \rv{X}_{N_0}$ with independent components and marginal PDFs $f(\dv{x},\dv{v}_t)$.
                            \STATE Sort the corresponding partial sums $\rs{H}^{(1)}\leq \rs{H}^{(2)} \leq \ldots \leq \rs{H}^{(N_0)}$.
                            \STATE Compute $\hat{\gamma}_{t+1}=\rs{H}^{(\lfloor \alpha N_0 \rfloor)}$.
                            \STATE Set $t=t+1$.
                        \ENDWHILE
                        \STATE Solve numerically to get $\dv{v}$ \begin{equation}
                                \max_{\dv{v}} \sum_{n=1}^{N_0} \ln f(\rv{X}_n;\dv{v}) \mathcal{L}(\rv{X}_n;\dv{v}_0,\dv{v}_{t-1}) \mathbf{1}_{\{H(\rv{X}_n)\leq \gamma_{\mathrm{th}}\}} \label{eq:opt-final}
                            \end{equation}
                        \STATE Generate $\rv{X}_1, \rv{X}_2, \ldots, \rv{X}_{N}$ with independent components and marginal PDFs $f(\dv{x},\dv{v})$.
                        
                        \STATE Estimate $\hat{P}_N$ using \begin{equation}
                            \hat{P}_N=\dfrac{1}{N} \sum_{n=1}^N \mathcal{L}(\rv{X}_n;\dv{v}_{0},\dv{v}) \mathbf{1}_{\{H(\rv{X}_n)\leq \gamma_{\mathrm{th}}\}}
                        \end{equation}
                        \RETURN $\hat{P}_N$.
                    \end{algorithmic}
                \end{algorithm}

                \begin{remark}
                    Varying the scale and non-centrality parameter of $v_1 \chi_2^2(v_2)$ is equivalent to shifting and scaling the underlying complex Gaussian. However, the fundamental theorem of system simulation \cite[Sec 4.3]{bucklewIntroductionRareEventSimulation2004} implies that the latter IS produces a larger variance than the former.
                \end{remark}


                \begin{remark}
                    Fix $m$, $M$, and $\mu$. Assume that, for some $\epsilon>0$, the algorithm converges for every $\gamma_{\mathrm{th}}\leq \epsilon$, and the optimizer of \eqref{eq:opt-final} satisfies \begin{equation}
                        v_1=\mathcal{O}(\gamma_{\mathrm{th}}) \text{ as } \gamma_{\mathrm{th}}\to0^+,
                    \end{equation} and \begin{equation}
                        v_2=\mathcal{O}(1) \text{ as } \gamma_{\mathrm{th}}\to0^+.
                    \end{equation} Then the cross-entropy method is of bounded relative error, i.e., \begin{equation}
                    \limsup_{\gamma_{\mathrm{th}}\to 0} \dfrac{\operatorname{var}(\hat{\ell}_{\mathrm{CE}})}{P^2} < \infty.
                \end{equation} The proof is very similar to that of ET: bound the modified Bessel function in the denominator from below by one and proceed similarly. Note that the assumed asymptotic regimes are observed later in the numerical results.
                \end{remark}
                

    \subsection{Multilevel Splitting}

        The authors of \cite{benrachedUniversalSplittingEstimator2020} embed the studied rare event in a continuous time Markov process in order to estimate its probability, making use of the approach in \cite{kahnEstimationRandomSampling1951}.

        The authors start with a multivariate gamma process $\rs{G}_i(t),\; t\in[0,1],\; i=1,\ldots,M$. Denote by $F_i^{-1}=F^{-1}_{\rs{X}_i}$ the inverse CDF of the independent random variables $\rs{X}_i$. Another stochastic process is generated from the former via \begin{equation}
            \rs{X}_i(t)=F_i^{-1}\left(1-\exp(-\rs{G}_i(t))\right).
        \end{equation} This sequence satisfies $\rs{X}_i(1)\stackrel{d}{=}\rs{X}_i$. Since the sum of the largest $m$ gains, $H(\rv{X}(t))$, is an increasing function in each component $\rs{X}_i(t)$, it follows that it is also an increasing function of $t$. The multivariate gamma process is simulated at a sequence of levels $t_0=0<t_1<\ldots<t_L=1$. Due to the monotone nature of the function $H$, we have \(\{H(\rv{X}(t_\ell))\leq \gamma_{\mathrm{th}}\}\subset \{H(\rv{X}(t_{\ell-1}))\leq \gamma_{\mathrm{th}}\}\). Then we can write the sought event probability as the product of conditional probabilities as follows \begin{equation}
            \begin{aligned}
                P&=\mathbb{P}[H(\rv{X}(1)\leq \gamma_{\mathrm{th}}]\\
                &=\prod_{\ell=1}^L\mathbb{P}\left[H(\rv{X}(t_{\ell})\leq \gamma_{\mathrm{th}} \;|\; H(\rv{X}(t_{\ell-1})\leq \gamma_{\mathrm{th}}\right].
            \end{aligned}
        \end{equation} The process starts from $0$. At the stage $t_1$, a set of $N$ increments following $\mathrm{Gamma}(t_1,1)$ is generated, and those falling under the threshold $\gamma_{\mathrm{th}}$ \[H(\rv{X}(t_1))\leq \gamma_{\mathrm{th}}\] are collected in the set $\chi_1$. At each stage $\ell$, $N$ samples are drawn via adding a $\mathrm{Gamma}(t_\ell-t_{\ell-1},1)$ random variable to a uniformly randomly chosen sample from the set of previous favourable outcomes, i.e., that satisfy $H(\rv{X}(t_{\ell-1})\leq \gamma_{\mathrm{th}}$. Denoting each set of favourable outcomes by $\mathbf{\chi}_\ell$, the final probability can be written as \begin{equation}
            \hat{P}=\prod_{\ell=1}^L\dfrac{|\mathbf{\chi_\ell}|}{N}.
        \end{equation} 
        
        Note that the authors \cite{benrachedUniversalSplittingEstimator2020} give a pilot algorithm in order to predetermine the levels $t_1,\ldots,t_{L-1}$ so that they can be meaningfully used to calculate the sought probability. It is based on the idea that, at each stage, the outage event should not be rare.

\section{Estimation Methods: Correlated Fading} \label{sec:methods-dependent}

    We now consider the more challenging case of correlated Rician fading. In this section, we present two methods to estimate the outage probability. The first one, approximate Exponential Twisting for Correlated channels (ETC), applies to an arbitrary mean and an arbitrary covariance matrix. The second method, Cross-Entropy for Correlated channels (CEC), applies only in the case of equal means, equal variances, and constant correlation.

    A common theme in this section is that correlation blocks the access we had to the joint distribution of non-central chi-squares. Note that this topic \footnote{In the literature, two equivalent problems are studied: the joint distribution of (non-central) chi-squares, and the joint distribution of their square roots. The latter appears typically in the study of the outage probability of SIMO under Rician fading and with selection combining (SC).} has attracted a number of researchers \cite{royenExpansionsMultivariateChisquare1991,royenCENTRALNONCENTRALMULTIVARIATE1995,beaulieuNovelSimpleRepresentations2011,wiegandSeriesApproximationsRayleigh2019,dharmawansaDiagonalDistributionComplex2009}. No similar closed-form expressions are available for either the PDF or the CDF. As a result, the two selection sampling methods cannot be easily extended to the general case. Regarding MLS \cite{benrachedUniversalSplittingEstimator2020}, we need to start from a vector of independent random variables and apply a pseudo-monotone function to it. The chi-squares are no longer independent, and starting from the Gaussians violates pseudo-monotonicity.

    \subsection{Exponential Twisting}

        As we mentioned before, we can no longer start from the chi-squares. Inspired by the method before, we decorrelate, shift, and scale the Gaussians so that \[\rs{X}_i \sim\mathrm{Exp}(M/\gamma_{\mathrm{th}})\] Hence the proposal distribution is given by \begin{equation}
            \rv{h}\sim \mathcal{CN}\left(\dv{0},\dfrac{\gamma_{\mathrm{th}}}{M}\dmt{I}_M\right)
        \end{equation} The likelihood ratio is thus a function of the complex channel vector, and it is given by \begin{equation}
            \begin{aligned}
                \mathcal{L}_{ETC}(\rv{h})&=\dfrac{\gamma_{\mathrm{th}}^M}{M^M |\dmt{\Sigma}|}\\
                &\times\exp\left(\dfrac{M}{\gamma_{\mathrm{th}}}\rv{h}^H\rv{h}-(\rv{h}-\dv{\mu})^H\dmt{\Sigma}^{-1}(\rv{h}-\dv{\mu})\right)
            \end{aligned}
        \end{equation}

        The following proposition proves that the new ET also has a bounded relative error as $\gamma_{\mathrm{th}}$ goes to zero. 
        \begin{proposition} \label{prop:ETC-BRE}
            Fix $M$, $m$, $\dv{\mu}$, and $\dmt{\Sigma}$. The approximate exponential twisting method is of bounded relative error, i.e., \begin{equation}
                    \limsup_{\gamma_{\mathrm{th}}\to 0} \dfrac{\operatorname{var}(\hat{\ell}_{\mathrm{ETC}})}{P^2} < \infty.
                \end{equation}
        \end{proposition}
        \begin{proof}
            See Appendix \ref{sec:proof-BRE-ETC}.
        \end{proof}
        As we can see, the bounded error property holds for an arbitrary mean and an arbitrary covariance matrix. This method is the most widely applicable in the paper, and, simultaneously, its desirable BRE property does not impose any additional structure.

    \subsection{Cross-Entropy}

        For this method, assume that the means are identical $\mu_i=\mu$, the variances are equal and without loss of generality normalized to unity $\sigma_i^2=1$, and the covariance structure is of constant correlation, i.e., \[\rho_{ij}=\rho,\text{ for }i\neq j.\] Again, we work with the Gaussians instead of the squared norms. The family of proposal distributions we suggest is \[\rv{h}\sim\mathcal{CN}(\nu \mathbf{1}_M,\sigma^2\dmt{\Sigma}_{\zeta})\] where \[\dmt{\Sigma}_{\zeta}=\begin{bmatrix}
            1 & \zeta &\cdots & \zeta        \\
            \zeta & 1 & \ddots & \vdots     \\
            \vdots & \ddots & \ddots &\zeta \\
            \zeta & \cdots & \zeta & 1
        \end{bmatrix}.\] The nominal distribution corresponds to $\nu=\mu$, $\zeta=\rho$, and $\sigma^2=1$. Order the parameters in $\dv{v}=[v_1,v_2,v_3]^T=[\nu,\sigma^2,\zeta]$. Hence the proposal PDF is given by \begin{equation}
            \begin{aligned}
                &f(\dv{h},\dv{v})=\dfrac{1}{\pi^Mv_2^{M}(1-v_3)^{M-1}[1+(M-1)v_3]}\\
                &\times\exp\left(-\dfrac{1}{v_2(1-v_3)}(\dv{h}-v_1\mathbf{1})^H\right.\\
                &\left.\left(\dmt{I}_M-\dfrac{v_3}{1+(M-1)v_3}\dmt{J}_M\right)(\dv{h}-v_1\mathbf{1})\right)
            \end{aligned}
        \end{equation} where $\dmt{J}_M$ is the $M\times M$ matrix of ones. Hereafter, we continue in exactly the same manner as the previous CE method.

\section{Theoretical Comparison: A Tabular Summary} \label{sec:comparison-theoretical}

To summarize the section, we compare the simulation methods according to their applicability and theoretical guarantees. For brevity, the comparison is limited to Table \ref{tab:theo-summary}. As we can see, all methods are guaranteed theoretically to converge to the sought probability except for CE and CEC. In addition, MLS, unlike UIS, PIS, and ET, is not guaranteed to have bounded relative error (BRE) as the threshold goes to zero. 

\begin{table}[ht]
    \centering
    \resizebox{\columnwidth}{!}{\begin{tabular}{|c|c|c|c|c|c|}
        \hline
        Method & Different LOS Power & Different Scattering Powers & Correlation & Convergence & BRE \\ \hline
        UIS & Y & Y & N  & Y & Y \\ \hline
        PIS & Y & N & N & Y & Y \\ \hline
        ET  & Y & N & N & Y & Y \\ \hline
        CE  & N & N & N & N & N \\ \hline
        MLS & Y & Y & N & Y & N \\ \hline
        CEC & N & N & Y\footnotemark & N & N \\ \hline
        ETC & Y & Y & Y & Y & Y \\ \hline
    \end{tabular}}
    \caption{Theoretical Comparison of Simulation Methods}
    \label{tab:theo-summary}
\end{table} \footnotetext{It only works for constant correlation.}

\section{Numerical Comparison} \label{sec:comparison-numerical}

    The estimated relative error defined in \eqref{eq:ERE} is informative of the quality of an estimate. However, as we are interested in efficiency, the time taken is included in the \textit{work normalized relative variance} (WNRV) \cite[Sec. 10.1]{kroese2013handbook}, which is the squared relative error multiplied by the computation time in seconds: \begin{equation}
        \mathrm{WNRV}=\dfrac{\operatorname{var}(\hat{P})}{P^2}\times \mathrm{Time} \;\mathrm{(s)} , \label{eq:WNRV}
    \end{equation} where $P$ is approximated using the very same method, as in \cite{benrachedSumOrderStatistics2018,benrachedUniversalSplittingEstimator2020,benrachedEfficientImportanceSampling2021}, the variance is estimated from the drawn samples \footnote{In the cases of NMC, UIS, and PIS, the variance can be estimated, for $N=1$, using the formulas $P-P^2$, $\ell_1 P-P^2$, and $\ell_2 P - P^2$ respectively. Since the MLS estimator is not an average of i.i.d. random variables, it must be repeated $n$ times to estimate its variance. In our experiments, we took $n=50$.}, and $\mathrm{Time} \;\mathrm{(s)}$ is the computation time in seconds the estimator takes. Theoretically, the WNRV should not significantly change upon an increase in the number of samples, since the time increases and the variance decreases by the same factor. All implementations hereafter were carried out on a standard personal computer using MATLAB. The CE threshold $\alpha$ is set to $0.1$ in all subsequent simulations. The built-in MATLAB optimizer is used, with the nominal distribution parameters as the initial vector. In the optimization phase, $N_0=10^3$ samples are used to estimate the pertinent expectations. Finally, note that the code used to generate Tables II, III, VII, and VIII is available on \href{https://github.com/MahmoudGhazal96/GSC-Rician}{GitHub}. 

    \subsection{Case of Uncorrelated Fading}
    
    Shown in Table \ref{tab:variance} are the results of the all previous MC methods, for the following input: $M=8$, $m=4$, $\mu=0.5$, and $\gamma_{\mathrm{th}}\in\{0.5,1\}$, assuming all channels follow $\mathcal{CN}(\mu,1)$. In addition, a deterministic method, the MGF-inversion based method in \cite{annamalaiAnalysisGeneralized2006}, is used. The paper's recommended hyper-parameters are utilized. It is observed that PIS, ET, and CE are the best contenders for the chosen values, with CE scoring the lowest WNRV. We can clearly see that the naive MC method is, as expected, extremely inefficient for rare events. In addition to that, UIS, which has the generality advantage over PIS, does much worse, gaining a several thousand-fold WNRV. On the other hand, MLS is not comparable, performance-wise, to the best three importance sampling methods (its WNRV is $100$ to $300$ times larger than that of PIS). Finally, the MGF-inversion based method validates the stochastic estimates.

    \begin{table*}[!t]
        \centering
        \caption{Variance Summary, All Methods, $M=8$, $m=4$, $\mu=0.5$}
        \label{tab:variance}
        \resizebox{\textwidth}{!}{\begin{tabular}{|c||c|c|c|c|c||c|c|c|c|c|}
            \hline
             & \multicolumn{5}{c||}{$\gamma_{\mathrm{th}} = 1$} & \multicolumn{5}{c|}{$\gamma_{\mathrm{th}}=0.5$} \\ \hline
             Method & N & $\hat{P}_N$  & Time (s) & RE (\%) & WNRV &N & $\hat{P}_N$  & Time (s) & RE (\%) & WNRV \\ \hline
             NMC & $10^9$ &$9.246\times 10^{-6}$ &$1498$ &$1.0$ &$1.6\times10^{-1}$    &$10^{10}$ &$5.57\times10^{-8}$ &$12685$ &$4.2$ &$2.3\times10^{+1}$ \\ \hline
             UIS & $5\times10^6$ &$9.358\times10^{-6}$ &$1074$ &$1.3$ &$1.8\times10^{-1}$    &$5\times10^6$ &$5.637\times10^{-8}$ &$1098$ &$2.1$ &$4.8\times10^{-1}$ \\ \hline
             PIS & $10^6$ &$9.214\times 10^{-6}$ &$8.7$ &$0.26$ &$5.8\times 10^{-5}$    &$10^6$ &$5.577\times 10^{-8}$ &$4.9$ &$0.28$ &$3.9\times 10^{-5}$ \\ \hline 
             MLS & $1.6\times10^5$ &$9.200\times 10^{-6}$ &$505$ &$0.40$ &$8.1\times10^{-3}$    &$1.6\times10^5$ &$5.545\times10^{-8}$ &$695$ &$0.42$ &$1.2\times10^{-2}$ \\ \hline
             ET  & $10^6$ &$9.205\times10^{-6}$ &$3.4$ &$0.25$ &$2.1\times10^{-5}$    &$10^6$ &$5.549\times 10^{-8}$ &$3.8$ &$0.16$ &$2.8\times10^{-5}$ \\ \hline
             CE  & $10^6$ &$9.227\times10^{-6}$ &$4.4$ &$0.15$ &$9.6\times10^{-6}$    &$10^6$ &$5.555\times 10^{-8}$ &$3.5$ &$0.16$ &$9.2\times10^{-6}$ \\ \hline
             ETC & $10^6$ & $9.235\times 10^{-6}$ & $3.2$ & $0.35$ & $4.0\times 10^{-5}$ & $10^6$ & $5.560\times 10^{-8}$ & $3.2$ & $0.32$ & $3.4\times10^{-5}$\\ \hline
             CEC & $10^6$ & $9.190\times10^{-6}$ & $1.2$ & $0.21$ & $5.4\times 10^{-6}$ & $10^6$ & $5.550\times 10^{-8}$ & $1.2$ & $0.2$ & $6.0\times10{-6}$\\ \hline
             {Annamalai \cite{annamalaiAnalysisGeneralized2006}} & {-} & {$9.223\times 10^{-6}$}  & {$251$}  & - & {-} & {-} & {$5.561\times10^{-8}$} & {$231$} & {-} & {-} \\ \hline 
             \end{tabular}}
    \end{table*}
    
    Furthermore, rarer events are tested for the best contenders. The event can be made rarer by either decreasing $\gamma_{\mathrm{th}}$ or increasing $\mu$. In Table \ref{tab:variance-rarer}, the mean $\mu$ is fixed at $\mu=0.5$ and the threshold is varied from $0.4$ to $0.2$. We observe that the methods are still close. Indeed, the WNRV of the best method (CE) is $4$ to $6$ times smaller than that of PIS, and ET lies in between. Note that the three methods still do well even for lower thresholds, but the probabilities become unrealistically rare.
    
    With $M=8$ and $m=2$ as in Table \ref{tab:variance-rarer-3}, the results show that PIS and CE perform remarkably well whereas ET deteriorates. Its WNRV increases by two orders of magnitude compared to the $m=4$ case, whereas PIS and CE still do well. This is probably due to the fact that ET was `optimized' for full sums, so the method works better for larger subsets, i.e., when $m$ is closer to $M$. Although the outage event is no longer rare under the importance sampling distribution in the case of smaller subsets, the variance of the whole estimator (with the likelihood ratio) becomes larger. For example, in the case of $m=2$, $M=8$, $\gamma=0.1$, and $\mu=0.5$, the outage event occurs $96\%$ of the time, the total time taken for an estimate with $10^6$ samples is less than $4$ seconds, but the WNRV is in the order of $10^{-2}$ due to the large variance. As put in \cite[Sec 4.2]{bucklewIntroductionRareEventSimulation2004}, the IS biased distribution may hit more unlikely regions in the subset of interest.

    \begin{table*}[ht!]
        \centering
        \caption{Variance Summary with Rarer Events, $M=8$, $m=4$, $\mu=0.5$, $N=10^6$}
        \label{tab:variance-rarer}
        \resizebox{\textwidth}{!}{\begin{tabular}{|c||c|c|c||c|c|c||c|c|c|}
            \hline
             &\multicolumn{3}{c||}{PIS} &\multicolumn{3}{c||}{ET} &\multicolumn{3}{c|}{CE} \\ \hline
             $\gamma_{\mathrm{th}}$ &$\hat{P}_N$ &RE (\%) &WNRV &$\hat{P}_N$ &RE (\%) &WNRV &$\hat{P}_N$ &RE (\%) &WNRV \\ \hline
             $0.4$ &$1.016\times10^{-8}$ &$0.29$ &$4.9\times10^{-5}$ &$1.016\times10^{-8}$ &$0.28$ &$2.7\times10^{-5}$ &$1.019\times10^{-8}$ &$0.16$ &$9.1\times10^{-6}$\\ \hline
             $0.3$ &$1.115\times10^{-9}$ &$0.29$ &$4.2\times10^{-5}$ &$1.109\times10^{-9}$ &$0.28$ &$2.8\times10^{-5}$ &$1.115\times10^{-9}$ &$0.16$ &$8.6\times10^{-6}$ \\ \hline
             $0.2$ &$4.761\times10^{-11}$ &$0.30$ &$4.0\times10^{-5}$ &$4.723\times10^{-11}$ &$0.29$ &$2.8\times10^{-5}$ &$4.739\times10^{-11}$ &$0.18$ &$1.1\times10^{-5}$ \\ \hline
        \end{tabular}}
    \end{table*} 

    \begin{table*}[!t]
        \centering
        \caption{Variance Summary with Rarer Events, $M=8$, $m=2$, $\mu=0.5$, $N=10^6$}
        \label{tab:variance-rarer-3}
        \resizebox{\textwidth}{!}{\begin{tabular}{|c||c|c|c||c|c|c||c|c|c|}
            \hline
             &\multicolumn{3}{c||}{PIS} &\multicolumn{3}{c||}{ET}&\multicolumn{3}{c|}{CE}  \\ \hline
             $\gamma_{\mathrm{th}}$ &$\hat{P}_N$ &RE (\%) &WNRV &$\hat{P}_N$ &RE (\%) &WNRV &$\hat{P}_N$ &RE (\%) &WNRV \\ \hline
             $1$ &$2.339\times10^{-4}$ &$0.20$ &$5.1\times10^{-5}$ &$2.395\times10^{-4}$ &$3.3$ &$4.6\times10^{-3}$ &$2.335\times10^{-4}$ &$0.16$ &$7.6\times10^{-6}$ \\ \hline
             $0.75$ &$3.394\times10^{-5}$ &$0.22$ &$4.9\times10^{-5}$ &$3.485\times10^{-5}$ &$3.8$ &$5.7\times10^{-3}$ &$3.385\times10^{-5}$ &$0.18$ &$9.2\times10^{-6}$\\ \hline
             $0.5$ &$1.928\times10^{-6}$ &$0.23$ &$4.7\times10^{-5}$ &$1.985\times10^{-6}$ &$4.3$ &$7.5\times10^{-3}$ &$1.925\times10^{-6}$ &$0.18$ &$9.4\times10^{-6}$\\ \hline
             $0.25$ &$1.099\times10^{-8}$ &$0.25$ &$4.0\times10^{-5}$ &$1.136\times10^{-8}$ &$4.8$ &$9.7\times10^{-3}$ &$1.099\times10^{-8}$ &$0.20$ &$1.1\times10^{-5}$ \\ \hline
             $0.1$ &$9.064\times 10^{-12}$ &$0.25$ &$4.2\times 10^{-5}$   &$9.386\times 10^{-12}$ &$5.2$ &$1.1\times 10^{-2}$ &$9.038 \times 10^{-12}$ &$0.21$ &$1.4\times10^{-5}$\\ \hline            
        \end{tabular}}
    \end{table*}

    The numerical experiment is repeated, fixing $\gamma_{\mathrm{th}}$, and varying the parameter $\mu$. The results are shown in Table \ref{tab:variance-means-gamma1}. Here we can see that PIS becomes worse as the mean increases. This can be justified by the bound $M_\ell$ that increases, for $\gamma_{\mathrm{th}}=17$, from $6.15$ at $\mu=2.3$, to $313.6$ at $\mu=3$. As we have mentioned before, the bound $M_\ell$ is also the expected number of trials to generate one sample in the corresponding acceptance-rejection method, hence its increase will blow up the computation time. As evident from the table, ET can also deteriorate when the mean increases.

    \begin{table*}[!t]
        \centering
        \caption{Variance Summary with Different Means, $M=8$, $m=4$, $N=10^6$}
        \label{tab:variance-means-gamma1}
        \resizebox{\textwidth}{!}{\begin{tabular}{|c|c||c|c|c||c|c|c||c|c|c|}
            \hline
             \multicolumn{2}{|c||}{} &\multicolumn{3}{c||}{PIS} &\multicolumn{3}{c||}{ET} &\multicolumn{3}{c|}{CE} \\ \hline
             $\gamma_{\mathrm{th}}$ &$\mu$  &$\hat{P}_N$ &RE (\%) &WNRV  &$\hat{P}_N$ &RE (\%) &WNRV  &$\hat{P}_N$ &RE (\%) &WNRV \\ \hline\hline
             \multirow{3}{*}{$1$} &$1$  &$5.157\times 10^{-8}$ &$0.30$ &$5.7\times 10^{-5}$  &$5.139\times 10^{-8}$ &$0.29$ &$3.0\times 10^{-5}$  &$5.156 \times 10^{-8}$ &$0.17$ &$9.3\times 10^{-6}$\\ \cline{2-11}
             &$1.25$  & $1.022\times10^{-9}$ &$0.33$ & $1.2\times10^{-4}$ &$1.022\times10^{-9}$ &$0.33$ &$4.1\times10^{-5}$ &$1.027\times10^{-9}$ &$0.16$ &$8.0\times10^{-6}$\\ \cline{2-11}
             &$1.5$ & $8.299\times10^{-12}$ & $0.37$ & $4.4\times10^{-4}$   &$8.299\times10^{-12}$ &$0.38$ &$5.3\times 10^{-5}$ &$8.349\times10^{-12}$ &$0.20$ &$1.4\times10^{-5}$ \\ \hline\hline
             \multirow{3}{*}{$17$} &$2.3$ & $9.004\times10^{-4}$ & $0.34$ & $8.6\times10^{-4}$ &$9.028\times10^{-4}$ &$3.2$ &$6.1\times10^{-3}$ &$9.004\times10^{-4}$ &$0.17$ &$1.5\times10^{-5}$\\ \cline{2-11}
             &$2.5$ & $6.646\times10^{-5}$ & $0.44$ & $2.9\times10^{-3}$ &$6.834\times10^{-5}$ &$5.6$ &$1.8\times10^{-2}$ &$6.692\times10^{-5}$ &$0.21$ &$2.0\times10^{-5}$ \\ \cline{2-11}
             &$3$ & $1.061\times10^{-8}$ & $0.78$ & $3.5\times10^{-1}$ &$1.200\times10^{-8}$ &$16$ &$1.5\times10^{-1}$ &$1.067\times10^{-8}$ &$0.30$ &$3.4\times10^{-5}$\\ \hline
        \end{tabular}}
    \end{table*}

    The WNRV metric incorporates sampling time, so another possible metric to use is the \textit{squared coefficient of variation} (SCV) which is defined by \begin{equation}
        \mathrm{SCV}=\dfrac{\operatorname{var}(\hat{\ell})}{P^2},
    \end{equation} where the probability and the variance of the estimate are estimated as we mentioned before for \eqref{eq:WNRV}. Compared to the estimated relative error (ERE), it does not take into consideration the number of samples $N$. ERE will converge to zero for any consistent estimator as $N$ goes to infinity. On the other hand, SCV may be viewed as an answer to the following question: ignoring the efficiency of the required sampling method, which estimation method is better? Note that, using CLT, it can be demonstrated that the number of samples required to reach a preset accuracy with a fixed probability is proportional to the SCV.

    To study further our numerical experiments, we plot the SCV versus $\gamma_{\mathrm{th}}$ fixing $\mu$ and vice versa in the following graphs. In Fig. \ref{fig:SCV_mu_half_m_4}, we can see that for $M=8$, $m=4$, and $\mu=0.5$, the SCV plots reinforce the relative efficiency comparison shown in Table \ref{tab:variance-rarer} using WNRV. We have included much lower thresholds with extremely rare probabilities just to showcase the bounded relative error property of PIS and ET. Despite having no theoretical guarantees, CE seems to do better in both WNRV and SCV fronts. It is noteworthy that, for large means, we can see in Fig. \ref{fig:SCV_gamma_1} and Fig. \ref{fig:SCV_gamma_17} that PIS has a better SCV than ET, especially in the case of $\gamma_{\mathrm{th}}=17$. This may suggest that the reason of the inefficiency of PIS in the case of large means is mainly due to inefficiency of the acceptance-rejection algorithm. In all graphs, we can see that CE always has the lowest SCV. 

    \begin{figure}
        \centering
        \includegraphics[width=\linewidth]{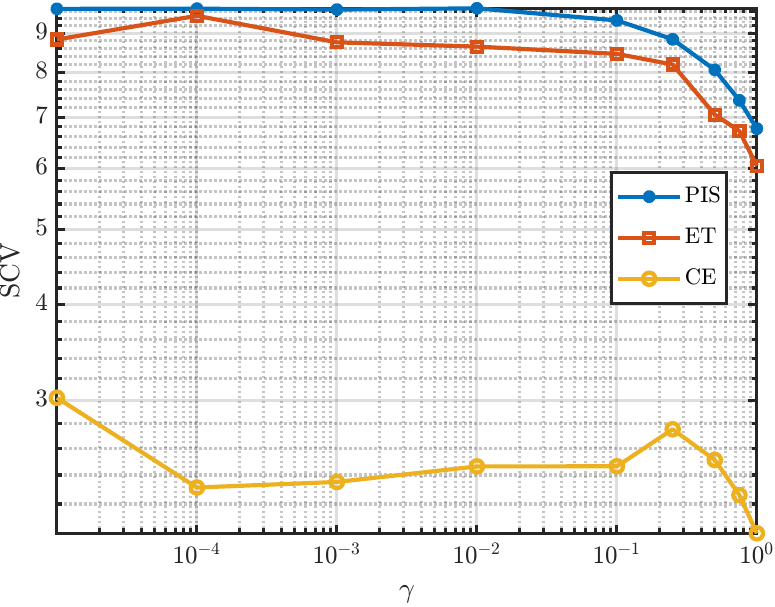}
        \caption{SCV versus $\gamma_{\mathrm{th}}$ for $M=8$, $m=4$, and $\mu=0.5$} 
        \label{fig:SCV_mu_half_m_4}
    \end{figure}

    \begin{figure}
        \centering
        \includegraphics[width=\linewidth]{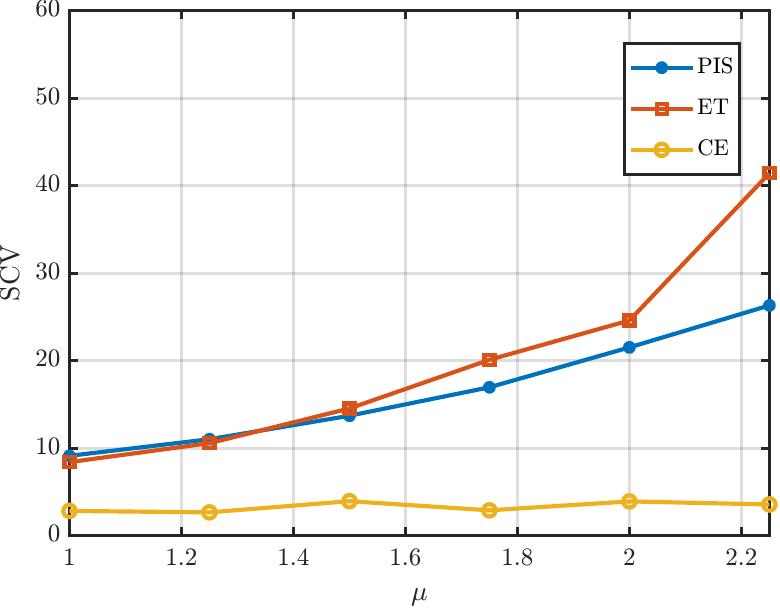}
        \caption{SCV versus $\mu$ for $M=8$, $m=4$, and $\gamma_{\mathrm{th}}=1$}
        \label{fig:SCV_gamma_1}
    \end{figure}

    \begin{figure}
        \centering
        \includegraphics[width=\linewidth]{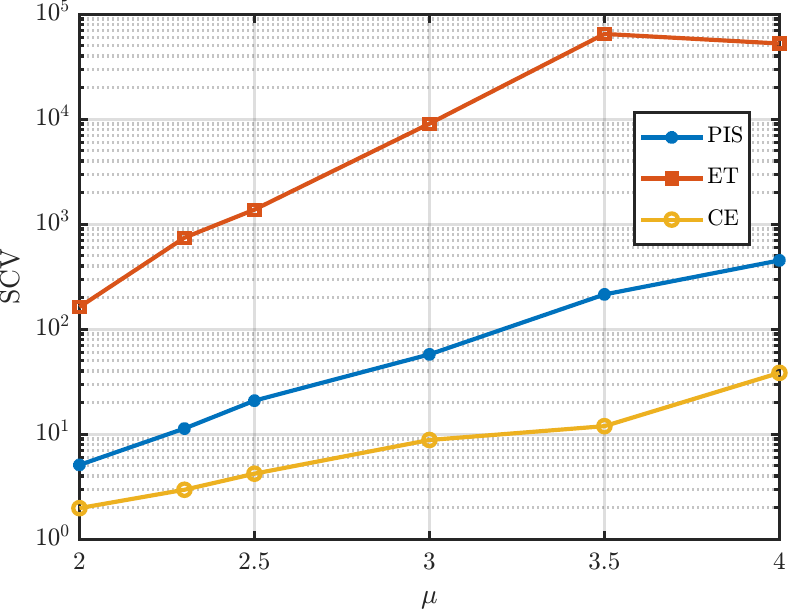}
        \caption{SCV versus $\mu$ for $M=8$, $m=4$, and $\gamma_{\mathrm{th}}=17$}
        \label{fig:SCV_gamma_17}
    \end{figure}

    Finally, note that for CE, the sufficient conditions on the asymptotic behaviour of the optimizer are observed in the implemented numerical applications. For example, Figure \ref{fig:ce-graph} shows the optimizer $[v_1,v_2]$ as a function of $\gamma_{\mathrm{th}}$ in a log-log scale for $M=8$, $m=4$, and $\mu=0.5$. We can see that the first parameter shows a line with slope $1$, and the second does not show significant change. Table \ref{tab:ce_v_params} shows the estimated probability, $v_2$, and $v_1/\gamma_{\mathrm{th}}$ as functions of $\gamma_{\mathrm{th}}$.

    \begin{figure}
        \centering
        \includegraphics[width=\linewidth]{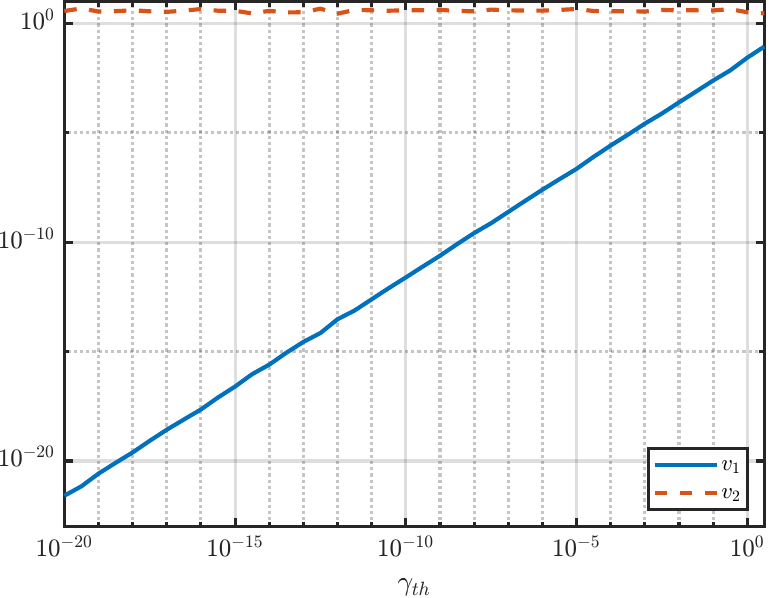}
        \caption{Parameters $v_1$ and $v_2$ as functions of $\gamma_{\mathrm{th}}$}
        \label{fig:ce-graph}
    \end{figure}

    \begin{table}[t]
        \centering
        \caption{CE Optimizer Parameters, $M=8$, $m=4$, $\mu=0.5$}
        \label{tab:ce_v_params}
        \resizebox{\columnwidth}{!}{\begin{tabular}{|c|c|c|c|c|}
        \hline
        $\gamma_{\mathrm{th}}$ & $\widehat{P}_{\mathrm{CE}}$ & WNRV & $v_1/\gamma_{\mathrm{th}}$ & $v_2$ \\
        \hline
        $10^{-19}$ & $2.2025\times 10^{-157}$ & $1.7833\times 10^{-5}$ & $2.6286\times 10^{-2}$ & $3.5085$ \\ \hline
        $10^{-18}$ & $2.2023\times 10^{-149}$ & $1.8985\times 10^{-5}$ & $2.4406\times 10^{-2}$ & $4.0020$ \\ \hline
        $10^{-17}$ & $2.2023\times 10^{-141}$ & $1.6689\times 10^{-5}$ & $2.6684\times 10^{-2}$ & $3.3877$ \\ \hline
        $10^{-16}$ & $2.2029\times 10^{-133}$ & $2.1735\times 10^{-5}$ & $2.2468\times 10^{-2}$ & $4.5476$ \\ \hline
        $10^{-15}$ & $2.2024\times 10^{-125}$ & $1.6715\times 10^{-5}$ & $2.5140\times 10^{-2}$ & $3.7770$ \\ \hline
        $10^{-14}$ & $2.2024\times 10^{-117}$ & $1.8954\times 10^{-5}$ & $2.5197\times 10^{-2}$ & $3.7572$ \\ \hline
        $10^{-13}$ & $2.2029\times 10^{-109}$ & $1.5875\times 10^{-5}$ & $2.7545\times 10^{-2}$ & $3.2455$ \\ \hline
        $10^{-12}$ & $2.2016\times 10^{-101}$ & $1.4907\times 10^{-5}$ & $2.9927\times 10^{-2}$ & $2.7855$ \\ \hline
        $10^{-11}$ & $2.2043\times 10^{-93}$  & $1.5377\times 10^{-5}$ & $2.4531\times 10^{-2}$ & $3.9868$ \\ \hline
        $10^{-10}$ & $2.2039\times 10^{-85}$  & $1.5115\times 10^{-5}$ & $2.4493\times 10^{-2}$ & $3.9856$ \\ \hline
        $10^{-9}$  & $2.2037\times 10^{-77}$  & $1.4771\times 10^{-5}$ & $2.3931\times 10^{-2}$ & $4.1488$ \\ \hline
        $10^{-8}$  & $2.2020\times 10^{-69}$  & $1.2944\times 10^{-5}$ & $2.6108\times 10^{-2}$ & $3.5763$ \\ \hline
        $10^{-7}$  & $2.2042\times 10^{-61}$  & $1.3771\times 10^{-5}$ & $2.4161\times 10^{-2}$ & $3.9754$ \\ \hline
        $10^{-6}$  & $2.2040\times 10^{-53}$  & $1.2569\times 10^{-5}$ & $2.5054\times 10^{-2}$ & $3.8632$ \\ \hline
        $10^{-5}$  & $2.2054\times 10^{-45}$  & $1.5415\times 10^{-5}$ & $2.2436\times 10^{-2}$ & $4.5930$ \\ \hline
        $10^{-4}$  & $2.2046\times 10^{-37}$  & $1.1536\times 10^{-5}$ & $2.6278\times 10^{-2}$ & $3.5685$ \\ \hline
        $10^{-3}$  & $2.2006\times 10^{-29}$  & $1.1367\times 10^{-5}$ & $2.6218\times 10^{-2}$ & $3.4909$ \\ \hline
        $10^{-2}$  & $2.1847\times 10^{-21}$  & $1.0968\times 10^{-5}$ & $2.4552\times 10^{-2}$ & $3.9747$ \\ \hline
        $10^{-1}$  & $2.0186\times 10^{-13}$  & $1.1245\times 10^{-5}$ & $2.4497\times 10^{-2}$ & $3.9271$ \\ \hline
        $10^{0}$   & $9.2192\times 10^{-6}$   & $8.3363\times 10^{-6}$ & $2.7191\times 10^{-2}$ & $3.2438$ \\
        \hline
        \end{tabular}}
    \end{table}

    \subsection{Case of Correlated Fading}

        Consider first the case of constant correlation. We test ETC and CEC for $M=8$, $m=4$, $\mu=0.5$, and $\rho=0.6$. The results are summarized in Table \ref{tab:variance-correlated-constant}.
        \begin{table*}[!t]
            \centering
            \caption{CEC vs ETC under constant correlation, $M=8$, $m=4$, $\mu=0.5$, $\rho=0.6$, $N=10^6$}
            \label{tab:variance-correlated-constant}
            \begin{tabular}{|c||c|c|c||c|c|c|}
                \hline
                 &\multicolumn{3}{c||}{ETC} &\multicolumn{3}{c|}{CEC}   \\ \hline
                 $\gamma_{\mathrm{th}}$ &$\hat{P}_N$ &RE (\%) &WNRV &$\hat{P}_N$ &RE (\%) &WNRV \\ \hline
                 $0.075$ &$1.072\times10^{-11}$ &$0.29$ &$2.8\times10^{-5}$
                 &$1.072\times10^{-11}$ & $0.24$ & $7.2\times10^{-6}$  \\ \hline
                 $0.1$ &$1.005\times10^{-10}$ &$0.28$ &$2.6\times10^{-5}$
                 &$1.007\times10^{-10}$ & $0.24$ & $6.0\times10^{-6}$  \\ \hline
                 $0.2$ &$1.997\times10^{-8}$ &$0.27$ &$2.2\times10^{-5}$
                 &$1.998\times10^{-8}$ & $0.23$ & $5.5\times10^{-6}$  \\ \hline
                 $0.3$ &$3.984\times10^{-7}$ &$0.26$ &$2.1\times10^{-5}$
                 &$3.989\times10^{-7}$ & $0.22$ & $5.0\times10^{-6}$  \\ \hline
                 $0.4$ &$3.105\times 10^{-6}$ &$0.25$ &$2.2\times 10^{-5}$
                 &$3.109\times 10^{-6}$ & $0.21$ & $5.3\times10^{-6}$  \\ \hline            
            \end{tabular}
        \end{table*} 
        To check the effect of correlation on the outage probability, we use the same example varying $\rho$ between $0$ and $0.8$. The results are plotted in Figure \ref{fig:rhos}. CEC was used, with samples of uniform size $N=10^6$, and the estimated relative error is less than $0.3\%$ in every estimate. Expectedly, as the correlation increases, the outage probability increases. We can see that the jumps between $\rho=0$ and $\rho=0.2$, and between $\rho=0.2$ and $\rho=0.4$ are comparable, while the gaps grow with larger correlations.

        \begin{figure}
            \centering
            \includegraphics[width=\linewidth]{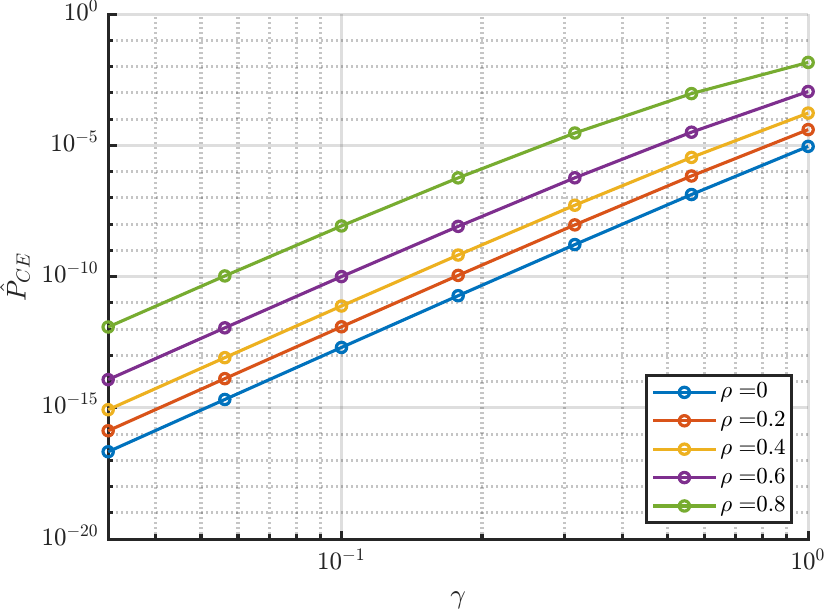}
            \caption{Outage Probability with different correlations as a function of $\gamma_{\mathrm{th}}$}
            \label{fig:rhos}
        \end{figure}

        Consider the case of a linear array of $M=8$ antennas, of which $m=4$ antennas are chosen, with correlated Rician fading in which the Rician parameters are given by $K_i=0.4$, $\Omega_i=1$, and the correlation is given by Jakes' model, also used in \cite{songAsymptoticAnalysisGSCRician2013}: \[\rho_{k\ell}=J_0(2\pi|k-\ell|d/\lambda)\] where $d$ is the separating distance between two adjacent antennas and $\lambda$ is the carrier wavelength. For $d=\lambda$, we give the results obtained by ETC and the asymptotic approximation in \cite{songAsymptoticAnalysisGSCRician2013} in Table \ref{tab:variance-correlated}. For the three larger probabilities, where naive Monte-Carlo (NMC) is computationally feasible, we also include NMC estimates as an independent check; these confirm ETC, not the asymptotic formula. The last column reports the relative difference between ETC and Song's approximation, defined as the absolute difference divided by ETC's estimate.

        \begin{table*}[!t]
            \centering
            \caption{ETC vs. Song's asymptotic approximation \cite{songAsymptoticAnalysisGSCRician2013} under Jakes' correlation, $M=8$, $m=4$, $K=0.4$, $\Omega=1$, $d=\lambda$}
            \label{tab:variance-correlated}
            \begin{tabular}{|c||c|c|c||c||c|c|c||c|}
                \hline
                 &\multicolumn{3}{c||}{ETC} &\multicolumn{1}{c||}{Song} &\multicolumn{3}{c||}{NMC} &  \\ \hline
                 $\gamma_{\mathrm{th}}$ &$\hat{P}_N$ &RE (\%) &WNRV &$\hat{P}$ &$\hat{P}_N$ &RE (\%) &WNRV & Rel Diff (\%) \\ \hline
                 $1$ &$6.802\times10^{-12}$ &$0.30$ &$3.3\times10^{-5}$
                 &$8.040\times10^{-12}$ & - & - & - &   18 \\ \hline
                 $2.5$ &$8.129\times10^{-9}$ &$0.30$ &$3.2\times10^{-5}$
                 &$1.227\times10^{-8}$ & - & - & - &   51 \\ \hline
                 $5$ &$1.393\times10^{-6}$ &$0.30$ &$3.6\times10^{-5}$
                 &$3.141\times10^{-6}$ & $1.450\times 10^{-6}$ & $8.3$  & $3.2\times 10^{-1}$ &  125 \\ \hline
                 $7.5$ &$2.407\times10^{-5}$ &$0.30$ &$4.8\times10^{-5}$
                 &$8.049\times10^{-5}$ & $2.412\times 10^{-5}$ & $2.0$ & $1.9\times10^{-2}$ &  234 \\ \hline
                 $10$ &$1.633\times 10^{-4}$ &$0.31$ &$6.9\times 10^{-5}$
                 &$8.040\times 10^{-4}$ & $1.643\times10^{-4}$ & $0.78$ & $2.6\times10^{-3}$ &  392 \\ \hline            
            \end{tabular}
        \end{table*} 

        As we can see, for moderately rare probabilities, Song's asymptotic approximation does not perform well. As we show in Figure \ref{fig:song-etc} which uses the same distribution, the asymptotic approximation is good for very small probabilities.

        \begin{figure}
            \centering
            \includegraphics[width=\linewidth]{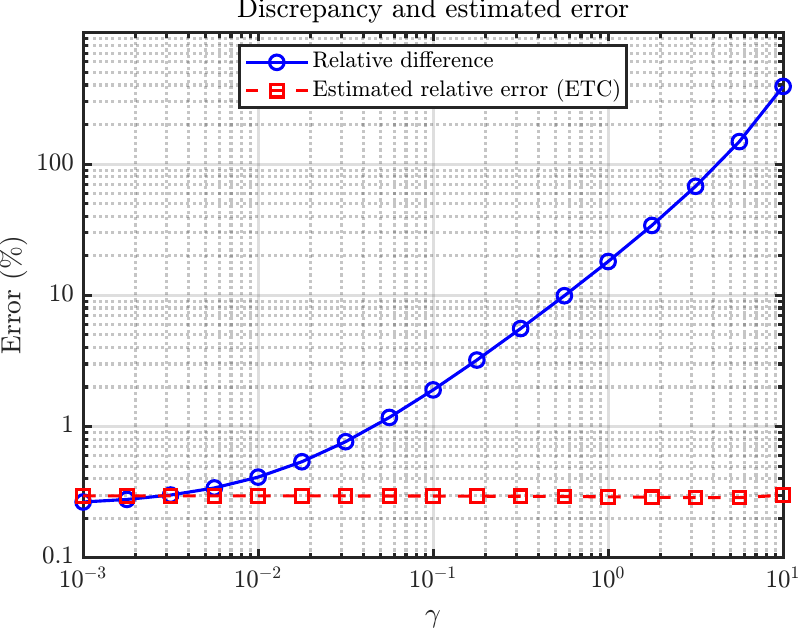}
            \caption{Relative Difference between Song's Asymptotic Formula and ETC Simulation Result}
            \label{fig:song-etc}
        \end{figure}

\section{Conclusion} \label{sec:conclusion}

    We developed and analysed a family of enhanced Monte Carlo estimators for the outage probability of GSC/MRC SIMO receivers under Rician fading. In the independent-fading case, we introduced partition importance sampling (PIS), and adapted exponential twisting (ET) and cross-entropy (CE) to the partial-sum-of-order-statistics problem; PIS and ET were shown to enjoy bounded relative error as the threshold tends to zero, and CE under an asymptotic condition on the optimizer. We benchmarked these methods against universal importance sampling \cite{benrachedSumOrderStatistics2018} and multilevel splitting \cite{benrachedUniversalSplittingEstimator2020}. In the correlated-fading setting we proposed ETC and CEC, where ETC is also shown to have bounded relative error.
    
    The numerical experiments paint a consistent picture. CE is the most robust method across the regimes we examined, despite its weaker theoretical guarantees. PIS is competitive but deteriorates as the line-of-sight components grow, because the acceptance-rejection sampling cost grows exponentially in the mean. ET deteriorates when the selected subset size is much smaller than the number of receive antennas, probably because its proposal targets the full sum rather than a partial one.  ETC is compared with a previously-established closed-form asymptotic approximation \cite{songAsymptoticAnalysisGSCRician2013}, and the comparison showed that the approximation may not be accurate for moderately rare events.
    
    Several directions remain open. First, the convergence of the CE iteration in our parametric family, and hence its BRE, should be established rigorously. Second, the CEC scheme as proposed assumes equal means and constant correlation; relaxing both would broaden its applicability. Third, the acceptance-rejection algorithm could be redesigned using a proposal density that fits the shape of the nominal one. Finally, extending multilevel splitting to correlated fading appears non-trivial because the construction of \cite{benrachedUniversalSplittingEstimator2020} hinges on monotone transforms of independent variables, and is worth a dedicated treatment.

\begin{appendices}

    \section{Proof of Proposition~\ref{prop:mode-bound}} \label{sec:proof-append-prop-1}

        Let \(\rs{X}\sim \chi_2^2(\lambda)\). Hence \[f_{\rs{X}}(x)=\dfrac{1}{2}\exp\left(-\dfrac{x+\lambda}{2}\right)I_0(\sqrt{\lambda x})\] Aubel et al. \cite{aubelUnimodalityNonCentral2000} proves that the distribution is strictly unimodal for $\lambda>2$. Let \(\ell(x)=\log f_{\rs{X}}(x)\). Then it can be shown that \begin{equation}
            \ell'(x)=-\frac12 +\dfrac{\sqrt{\lambda}}{2\sqrt{x}} \dfrac{I_1(\sqrt{\lambda x})}{I_0(\sqrt{\lambda x})}
        \end{equation} The mode $Z_\lambda$ is the unique solution of $\ell'(x)=0$, hence it satisfies \begin{equation}
            \dfrac{I_1(\sqrt{\lambda Z_\lambda})}{I_0(\sqrt{\lambda Z_\lambda})}=\dfrac{\sqrt{Z_\lambda}}{\sqrt{\lambda}}
        \end{equation} Equations (11) and (16) in \cite{amosComputationBessel1974} give \begin{equation}
            \dfrac{\sqrt{1+t^2}-1}{t}\leq \dfrac{I_1(t)}{I_0(t)}\leq \dfrac{\sqrt{1+4t^2}-1}{2t} \label{eq:bessel-bound}
        \end{equation} 
        Combining the last two equations, we can easily prove the proposition.

    \section{Proof of Proposition~\ref{prop:pdf-bound}}  \label{sec:proof-append-prop-2}

        Let \(\rs{X}\sim \chi_2^2(\lambda)\). Aubel et al. \cite{aubelUnimodalityNonCentral2000} proves that the corresponding PDF is log-concave.\footnote{Note that Karlin \cite{karlinTotalPositivity1964} proves that \(f_{\rs{X}}\) is a Polya-frequency density of order $2$, hence log-concave.} Let \(\ell(x)=\log f_{\rs{X}}(x)\). Then the function $\ell'$ is decreasing. Let $Z_\lambda$ be the mode. Let $x_0\in(\lambda-2,\lambda-1)$. Based on the relative position of the chosen $x_0$ and the mode, we consider two cases. 

        \textit{Case 1}: $x_0\leq Z_\lambda$. We can write \begin{align*}
            \ell(Z_\lambda)-\ell(x_0)&=\int_{x_0}^{Z_\lambda} \ell'(t)dt\leq (Z_\lambda-x_0)\ell'(x_0)\\
            &\leq (\lambda-1-x_0) \ell'(x_0)
        \end{align*} Using the upper bound in \eqref{eq:bessel-bound}, we can demonstrate that \begin{equation}
            \ell'(x_0)\leq -\dfrac{1}{2}+\dfrac{\sqrt{1+4\lambda x_0}-1}{4x_0}
        \end{equation}
        Therefore, \begin{equation}
            \log\left[\dfrac{f(Z_\lambda)}{f(x_0)}\right]\leq (\lambda-1-x_0)\left(-\dfrac{1}{2}+\dfrac{\sqrt{1+4\lambda x_0}-1}{4x_0}\right)
        \end{equation}

        \textit{Case 2}: $x_0 > Z_\lambda$. We can write \begin{equation}
            \ell(Z_\lambda)-\ell(x_0)\leq (x_0-\lambda+2)(-\ell'(x_0))
        \end{equation} Using the lower bound in \eqref{eq:bessel-bound}, and proceeding similarly, we can demonstrate that \begin{equation}
            \log\left[\dfrac{f(Z_\lambda)}{f(x_0)}\right]\leq(x_0-\lambda+2)\left(\dfrac{1}{2}-\dfrac{\sqrt{1+\lambda x_0}-1}{2x_0}\right)
        \end{equation}

    \section{Proof of Proposition~\ref{prop:Mell-explosion}} \label{proof-append-mean}

        Since $|\mu|$ goes to infinity, the threshold $\gamma_{\mathrm{th}}$ will eventually be smaller than the mode. Hence we can use \eqref{eq:Ml-large-mu-small-gamma}, which is rewritten here for the sake of convenience \[M_\ell=\dfrac{\left[2\gamma_{\mathrm{th}}f_{\chi_2^2(2|\mu|^2)}(2\gamma_{\mathrm{th}})\right]^m}{m!F_{\chi_{2m}^2(2m|\mu|^2)}(2\gamma_{\mathrm{th}})}.\]

        Firstly, we have \begin{equation*}
            2 f_{\chi_2^2(2|\mu|^2)}(2\gamma_{\mathrm{th}})=\exp\left(-\gamma_{\mathrm{th}}-|\mu|^2\right)I_0\left(2|\mu|\sqrt{\gamma_{\mathrm{th}}}\right).
        \end{equation*} We shall use \begin{equation}
            I_\nu(z)\underset{z\to\infty}{\sim}\dfrac{e^z}{\sqrt{2\pi z}}\left(1+\mathcal{O}\left(\dfrac{1}{z}\right)\right)
        \end{equation} from \cite{NIST:DLMF:10.40.1} to write \begin{equation}
            2 f_{\chi_2^2(2|\mu|^2)}(2\gamma_{\mathrm{th}})\sim\dfrac{e^{-\gamma_{\mathrm{th}}}}{2\sqrt{\pi}\gamma_{\mathrm{th}}^{\frac14}}\dfrac{e^{-|\mu|^2} e^{2|\mu|\sqrt{\gamma_{\mathrm{th}}}}}{|\mu|^{\frac12}}.
        \end{equation}

        On the other hand, we have \[F_{\chi_{2m}^2(2m|\mu|^2)}(2\gamma_{\mathrm{th}})=1-Q_m(\sqrt{2m}|\mu|,\sqrt{2\gamma_{\mathrm{th}}})\] where $Q_m(.,.)$ is the Marcum-Q function. 
        
        From \cite{temmeAsymptoticNoncentralChiSquare1993}, it holds that \begin{equation}
            Q_\nu(a,b)\sim 1-\dfrac{1}{2}\left(\dfrac{b}{a}\right)^{\nu-\frac{1}{2}}\operatorname{erfc}\left(\dfrac{a-b}{\sqrt{2}}\right)
        \end{equation} for $a>b$ and $ab\to\infty$. Hence, employing the fact that \begin{equation*}
            \operatorname{erfc}(x)=\dfrac{e^{-x^2}}{x\sqrt{\pi}}\left(1+\mathcal{O}\left(\dfrac{1}{x^2}\right)\right),
        \end{equation*} we can write \begin{equation}
            F_{\chi_{2m}^2(2m|\mu|^2)}\sim \dfrac{\gamma_{\mathrm{th}}^{(2m-1)/4}e^{-\gamma_{\mathrm{th}}}}{2\sqrt{\pi}m^{\frac{2m+1}{4}}} \dfrac{e^{-m|\mu|^2}e^{2\sqrt{m\gamma_{\mathrm{th}}}|\mu|}}{|\mu|^{m+\frac12}}
        \end{equation} as $|\mu|$ goes to infinity. Therefore, we have \eqref{eq:bound-asymp}, and the result is proved.
        
        \begin{figure*}[!t]
            \begin{equation}
                M_\ell\sim \dfrac{m^{\frac{2m+1}{4}}\gamma_{\mathrm{th}}^{\frac{m+1}{4}}}{2^{m-1}m!\pi^{\frac{m-1}{2}}\exp\left((m-1)\gamma_{\mathrm{th}}\right)}|\mu|^{\frac{m+1}{2}}\exp\left(2\sqrt{\gamma_{\mathrm{th}}}(m-\sqrt{m})|\mu|\right) \text{ as } |\mu| \to \infty
                \label{eq:bound-asymp}
            \end{equation}
            \hrulefill
        \end{figure*}

    \section{Proof of Proposition~\ref{prop:opt-ell2}} \label{sec:proof-opt-ell2}

        Assume $r\neq 0$. In the case of divisibility of $M$ by $m$, the proof is very similar. To formalize the problem, we set the following notations. Let $\mathcal{P}=(G_1,\ldots,G_q;R)$ be a partition of $\{1,\ldots,M\}$, so that $|G_t|=m$, for $t=1,\ldots,q$, and $|R|=r$. Basically, we think of these sets as the sets of indices in the partition represented by an arbitrary event $\mathcal{N}_2$. Let \[\Delta_t=\sum_{i\in G_t}|\mu_i|^2,\] and \[\Delta_R=\sum_{i\in R}|\mu_i|^2.\] The canonical partition, that corresponds to our sought result, is given by $\mathcal{P^*}=(G_1^*,\ldots,G_q^*;R^*)$, where $R^*=\{1,\ldots,r\}$, and $G_t^*=\{r+1+(t-1)m,\ldots,r+tm\}$, for $t=1,\ldots,q$. Let $H_k(\delta,\gamma_{\mathrm{th}})=F_{\chi_{2k}^2(2\delta)}(2\gamma_{\mathrm{th}})$, with $k=m$, or $k=r$. This function is log-concave in $\delta$ \cite{sunMontonoicityLogCooncavityMarcum2010}. Denote by $\dv{\Delta}(\mathcal{P})$ the vector of non-centrality parameters in $\ell_2$: $\dv{\Delta}(\mathcal{P})=(\Delta_q,\Delta_{q-1},\ldots,\Delta_1,\Delta_R)$, and write $\dv{\Delta}^*=(\Delta_q^*,\Delta_{q-1}^*,\ldots,\Delta_1^*,\Delta_R^*)$. Let \[\psi(\dv{\Delta})=\log \ell_2(\mathcal{P},\gamma_{\mathrm{th}})=\varphi_r(\Delta_R)+\sum_{t=1}^q \varphi_m(\Delta_t),\] where $\varphi_m(\delta)=\log H_m(\delta,\gamma_{\mathrm{th}})$, and $\varphi_r(\delta)=\log H_r(\delta,\gamma_{\mathrm{th}})$. Consider Proposition H.2 and its extension H.2.a in \cite{marshallInequalitiesMajorization1979}, flipping the inequalities in both the condition and the result in H.2.a, i.e., Schur-concavity instead of Schur-convexity. We can verify that, for any partition $\mathcal{P}$, $\dv{\Delta}(\mathcal{P})$ is majorized by $\dv{\Delta}^*$: $\dv{\Delta}(\mathcal{P})\prec \dv{\Delta}^*$. This holds because the sum of the first $t$ components of $\dv{\Delta}$ is less than that of the first largest $mt$ mean norms. To prove that the canonical partition $\mathcal{P}^*$ is a minimizer of $\ell_2$, it's sufficient to show that \begin{equation}
            \varphi_m'(a)\leq \varphi_m'(b) \text{ whenever } a\geq b, \label{eq:maj-1}
        \end{equation} and \begin{equation}
            \varphi_m'(a)\leq \varphi_r'(b) \text{ whenever } a\geq b. \label{eq:maj-2}
        \end{equation}
        A closer look at \ref{eq:maj-1} reveals that it's equivalent to a decreasing derivative, which is guaranteed by log-concavity. Now comes the second part. We can see that it's sufficient to prove that $\varphi_m'(b)\leq \varphi_r'(b)$ for $b\geq 0$. We can demonstrate that \[\varphi_m'(\delta)=\dfrac{1-Q_{m+1}(\sqrt{2\delta},\sqrt{2\gamma_{\mathrm{th}}})}{1-Q_{m}(\sqrt{2\delta},\sqrt{2\gamma_{\mathrm{th}}})}-1.\] Hence it is sufficient to prove that the fraction in the above equation is a decreasing function in $m$. But this is guaranteed by the log-concavity of the function $\nu \mapsto 1-Q_\nu(a, b)$ given again in \cite{sunMontonoicityLogCooncavityMarcum2010}. Hence the result is proved.

    \section{Proof of Proposition~\ref{prop:ET-BRE}} \label{sec:proof-append-1}

        We consider again the integer division of $M$ by $m$ \begin{equation*}
            M=qm+r;\;r\in\{0,1,\ldots,m-1\}
        \end{equation*} Hereafter, sums (resp. products) with lower index higher than upper index are considered to be empty and equal to zero (resp. one). Start with bounding the second moment of the estimate $\hat{\ell}_{ET}$ that can be written as \begin{equation*}
            \begin{aligned}
                &\mathbb{E}[\hat{\ell}_{\mathrm{ET}}^2]=\dfrac{\gamma_{\mathrm{th}}^{2M}\exp\left(-2\sum_{i=1}^M|\mu_i|^2\right)}{M^{2M}}\\
                &\times \mathbb{E}\left[\exp\left(2\dfrac{M-\gamma_{\mathrm{th}}}{\gamma_{\mathrm{th}}}\sum_{i=1}^M \rs{X}_i\right)\prod_{i=1}^M I_0^2(2|\mu_i|\sqrt{\rs{X}_i})\mathbf{1}_{\{H(\rv{X})\leq \gamma_{\mathrm{th}}\}}\right].
            \end{aligned}
        \end{equation*} Firstly, we utilize the simple fact \[\mathlarger{\mathbf{1}}_{\left\{H(\rv{X})\leq\gamma_{\mathrm{th}}\right\}}\leq \left[\prod_{t=0}^{q-1}\mathlarger{\mathbf{1}}_{\left\{\sum_{i=mt+1}^{mt+m} \rs{X}_{i}\leq \gamma_{\mathrm{th}}\right\}}\right]\mathlarger{\mathbf{1}}_{\left\{\sum_{i=mq+1}^M \rs{X}_{i}\leq \gamma_{\mathrm{th}}\right\}}.\] Hence we have \eqref{eq:first-bound}.  
        \begin{figure*}[!t]
            \begin{equation}
                \begin{aligned}
                    &\mathbb{E}[\hat{\ell}_{\mathrm{ET}}^2]\leq \dfrac{\gamma_{\mathrm{th}}^{2M}\exp\left(-2\sum_{i=1}^M|\mu_i|^2\right)}{M^{2M}} \prod_{t=0}^{q-1} \mathbb{E}\left[\exp\left(2\dfrac{M-\gamma_{\mathrm{th}}}{\gamma_{\mathrm{th}}}\sum_{i=mt+1}^{mt+m}\rs{X}_i\right)\left[\prod_{i=mt+1}^{mt+m}I_0^2(2|\mu_i|\sqrt{\rs{X}_i})\right]\mathbf{1}_{\left\{\sum_{i=mt+1}^{mt+m}\rs{X}_i\leq\gamma_{\mathrm{th}}\right\}}\vphantom{\exp\left(2\dfrac{M-\gamma_{\mathrm{th}}}{\gamma_{\mathrm{th}}}\sum_{i=mt+1}^{mt+m}\rs{X}_i\right)}\right]\\
                    &\times \mathbb{E}\left[\exp\left(2\dfrac{M-\gamma_{\mathrm{th}}}{\gamma_{\mathrm{th}}}\sum_{i=mq+1}^{M}\rs{X}_i\right)\prod_{i=mq+1}^{M}\left[I_0^2(2|\mu_i|\sqrt{\rs{X}_i})\right]\mathbf{1}_{\left\{\sum_{i=mq+1}^{M}\rs{X}_i\leq\gamma_{\mathrm{th}}\right\}}\vphantom{\exp\left(2\dfrac{M-\gamma_{\mathrm{th}}}{\gamma_{\mathrm{th}}}\sum_{i=mq+1}^{M}\rs{X}_i\right)}\right]\\
                \end{aligned} \label{eq:first-bound}
            \end{equation}
            \hrulefill
        \end{figure*}

        Since under the sampling distribution, $X_i$ are identically distributed, any indexing independent of $\mu_i$ can be reset to start from $1$; this may clear the clutter a bit. Now since \begin{equation*}
            I_0(x)=\dfrac{1}{\pi}\int_0^\pi e^{x\cos\theta}d\theta\leq\dfrac{1}{\pi}\int_0^\pi e^xd\theta=e^x
        \end{equation*} for any positive real number $x$, we can write \eqref{eq:second-bound}. \begin{figure*}[!t]
            \begin{equation}
                \begin{aligned}
                    &\mathbb{E}[\hat{\ell}_{\mathrm{ET}}^2]\leq \dfrac{\gamma_{\mathrm{th}}^{2M}\exp\left(-2\sum_{i=1}^M|\mu_i|^2\right)}{M^{2M}} \prod_{t=0}^{q-1} \mathbb{E}\left[\exp\left(2\dfrac{M-\gamma_{\mathrm{th}}}{\gamma_{\mathrm{th}}}\sum_{i=1}^{m}\rs{X}_i\right)\exp\left(4\sum_{i=mt+1}^{mt+m}|\mu_i|\sqrt{\rs{X}_i}\right)\mathbf{1}_{\left\{\sum_{i=1}^{m}\rs{X}_i\leq\gamma_{\mathrm{th}}\right\}}\right]\\
                    &\times \mathbb{E}\left[\exp\left(2\dfrac{M-\gamma_{\mathrm{th}}}{\gamma_{\mathrm{th}}}\sum_{i=1}^{r}\rs{X}_i\right)\exp\left(4\sum_{i=mt+1}^{mt+m}|\mu_i|\sqrt{\rs{X}_i}\right)\mathbf{1}_{\left\{\sum_{i=1}^{r}\rs{X}_i\leq\gamma_{\mathrm{th}}\right\}}\vphantom{\exp\left(2\dfrac{M-\gamma_{\mathrm{th}}}{\gamma_{\mathrm{th}}}\sum_{i=mq+1}^{M}\rs{X}_i\right)}\right]\\
                \end{aligned} \label{eq:second-bound}
            \end{equation}
            \hrulefill
        \end{figure*}
        By Cauchy-Schwarz inequality, we have \begin{equation*}
            \sum_{i=1}^{m}|\mu_i|\sqrt{\rs{X}_i}\leq \left(\sum_{i=1}^m|\mu_i|^2\right)^{1/2}\left(\sum_{i=1}^m\rs{X}_i\right)^{1/2},
        \end{equation*} the equation which can be evidently replicated for different indices. Then we have \eqref{eq:third-bound}. \begin{figure*}[!t]
                    \begin{equation}
                        \begin{aligned}
                            &\mathbb{E}[\hat{\ell}_{\mathrm{ET}}^2]\leq \dfrac{\gamma_{\mathrm{th}}^{2M}\exp\left(-2\sum_{i=1}^M|\mu_i|^2\right)}{M^{2M}}\\
                            &\times\prod_{t=0}^{q-1} \mathbb{E}\left[\exp\left(2\dfrac{M-\gamma_{\mathrm{th}}}{\gamma_{\mathrm{th}}}\sum_{i=1}^{m}\rs{X}_i\right)\exp\left\{4\left(\sum_{i=mt+1}^{mt+m}|\mu_i|^2\right)^{1/2}\left(\sum_{i=1}^m\rs{X}_i\right)^{1/2}\right\}\mathbf{1}_{\left\{\sum_{i=1}^{m}\rs{X}_i\leq\gamma_{\mathrm{th}}\right\}}\right]\\
                            &\times \mathbb{E}\left[\exp\left(2\dfrac{M-\gamma_{\mathrm{th}}}{\gamma_{\mathrm{th}}}\sum_{i=1}^{r}\rs{X}_i\right)\exp\left\{4\left(\sum_{i=mq+1}^{M}|\mu_i|^2\right)^{1/2}\left(\sum_{i=1}^r\rs{X}_i\right)^{1/2}\right\}\mathbf{1}_{\left\{\sum_{i=1}^{r}\rs{X}_i\leq\gamma_{\mathrm{th}}\right\}}\vphantom{\exp\left(2\dfrac{M-\gamma_{\mathrm{th}}}{\gamma_{\mathrm{th}}}\sum_{i=mq+1}^{M}\rs{X}_i\right)}\right]\\
                        \end{aligned} \label{eq:third-bound}
                    \end{equation}
                    \hrulefill
                \end{figure*}
        Now, since all the random variables appear in similar sums that are to be bounded in the indicator functions by $\gamma_{\mathrm{th}}$, and $\gamma_{\mathrm{th}}$ is eventually less than $M$, we can write \begin{equation*}
            \begin{aligned}
                &\mathbb{E}[\hat{\ell}_{\mathrm{ET}}^2]\leq \dfrac{\gamma_{\mathrm{th}}^{2M}\exp\left(-2\sum_{i=1}^M|\mu_i|^2\right)}{M^{2M}}\\
                &\times \prod_{t=0}^{q-1} \exp(2(M-\gamma_{\mathrm{th}}))\exp(4\sqrt{\gamma_{\mathrm{th}}}d_t)\mathbb{P}\left[\sum_{i=1}^m \rs{X}_i\leq \gamma_{\mathrm{th}}\right]\\
                &\times \exp(2(M-\gamma_{\mathrm{th}}))\exp(4\sqrt{\gamma_{\mathrm{th}}}d_q)\mathbb{P}\left[\sum_{i=1}^r\rs{X}_i\leq\gamma_{\mathrm{th}}\right],
            \end{aligned}
        \end{equation*} 
        where \[d_t=\left(\sum_{i=mt+1}^{mt+m}|\mu_i|^2\right)^{1/2}\] for $t=0,\ldots,q-1$, and \[d_q=\left(\sum_{i=mq+1}^{M}|\mu_i|^2\right)^{1/2}.\] Now, since $\rs{X}_i\sim\mathrm{Exp}(M/\gamma_{\mathrm{th}})\equiv\frac{\gamma_{\mathrm{th}}}{2M}\chi_2^2$, we have \begin{equation*}
            \sum_{i=1}^m \rs{X}_i \sim \frac{\gamma_{\mathrm{th}}}{2M}\chi_{2m}^2.
        \end{equation*} 
        Thus we can write\begin{equation*}
            \begin{aligned}
                &\mathbb{E}[\hat{\ell}_{\mathrm{ET}}^2]\leq \dfrac{\gamma_{\mathrm{th}}^{2M}e^{-2\|\dv{\mu}\|^2}}{M^{2M}}\exp(2qM-2q\gamma_{\mathrm{th}})\exp\left(4\sqrt{\gamma_{\mathrm{th}}}\sum_{t=0}^q d_t\right)\\
                &\times \left[1-e^{-M} \sum_{k=0}^{m-1} \frac{M^k}{k!}\right]^q\left[1-e^{-M} \sum_{k=0}^{r-1} \frac{M^k}{k!}\right].
            \end{aligned}
        \end{equation*} 
        On the other hand, as the sought $P$ is the probability of the maximum partial sum being less than $\gamma_{\mathrm{th}}$ is less than that of the whole sum, we have \begin{equation}
            P>F_{\chi_{2M}^2(2\|\dv{\mu}\|^2)}(2\gamma_{\mathrm{th}}),
        \end{equation}
        where $F_{\chi_{2M}^2}$ denotes the CDF of $\chi_{2M}^2$. Considering the mixture expansion of non-central chi-square in terms of (central) chi-squares, we can say that as $\gamma_{\mathrm{th}}$ goes to zero, we have \begin{equation*}
            \begin{aligned}
                F_{\chi_{2M}^2(2\|\dv{\mu}\|^2)}(2\gamma_{\mathrm{th}}) &\sim e^{-\|\dv{\mu}\|^2}\dfrac{\gamma(M,\gamma_{\mathrm{th}})}{\Gamma(M)}\\
                &\sim \dfrac{e^{-\|\dv{\mu}\|^2}\gamma_{\mathrm{th}}^M}{\Gamma(M+1)},
            \end{aligned}
        \end{equation*}
        as $\gamma_{\mathrm{th}}$ goes to zero, where $\gamma(.,.)$ is the lower incomplete gamma function. Squaring the last bound, and dividing the upper bound of the second moment of the estimator by the lower bound of $P$, $\gamma_{\mathrm{th}}^{2M}$ is cancelled out and our proof is completed.

            \section{Proof of Proposition~\ref{prop:ETC-BRE}} \label{sec:proof-BRE-ETC}
            
                Write first \begin{equation}
                    \begin{aligned}
                        \hat{\ell}_{ETC}^2&=\dfrac{\gamma_{\mathrm{th}}^{2M}}{M^{2M}|\dmt{\Sigma}|^{2}}\exp\left(\dfrac{2M}{\gamma_{\mathrm{th}}}\rv{h}^H\rv{h}\right.\\
                        &\left.-2(\rv{h}-\dv{\mu})^H\dmt{\Sigma}^{-1}(\rv{h}-\dv{\mu})\vphantom{\dfrac{\gamma_{\mathrm{th}}^{2M}}{M^{2M}|\dmt{\Sigma}|^{2}}}\right)\mathbf{1}_{\left\{G(\rv{h})\leq\gamma_{\mathrm{th}}\right\}}
                    \end{aligned}
                \end{equation} where \[G(\rv{h})=\max\sum_{\ell=1}^m|\rs{h}_{k_\ell}|^2\] Using the fact that \[\exp\left(-2(\rv{h}-\dv{\mu})^H\dmt{\Sigma}^{-1}(\rv{h}-\dv{\mu})\vphantom{\dfrac{\gamma_{\mathrm{th}}^{2M}}{M^{2M}|\dmt{\Sigma}|^{2M}}}\right)\leq 1\] and partitioning similarly to the previous cases we get \eqref{eq:ETC-1}. \begin{figure*}[!t]
                \begin{equation}
                    \begin{aligned}
                        \hat{\ell}_{\mathrm{ETC}}^2&\leq \left(\dfrac{\gamma_{\mathrm{th}}}{M}\right)^{2M} \dfrac{1}{|\dmt{\Sigma}|^2}\prod_{t=0}^{q-1} \left[\exp\left(\dfrac{2M}{\gamma_{\mathrm{th}}}\sum_{k=mt+1}^{mt+m}|\rs{h}_k|^2\right)\mathbf{1}_{\left\{\sum_{k=mt+1}^{mt+m}|\rs{h}_k|^2\leq \gamma_{\mathrm{th}}\right\}}\right]\\
                        &\times \exp\left(\dfrac{2M}{\gamma_{\mathrm{th}}}\sum_{k=mq+1}^{M}|\rs{h}_k|^2\right)\mathbf{1}_{\left\{\sum_{k=mq+1}^{M}|\rs{h}_k|^2\leq \gamma_{\mathrm{th}}\right\}}
                    \end{aligned} \label{eq:ETC-1}
                \end{equation}
                \hrulefill
            \end{figure*} Hence, using the fact that the indicator function is less than or equal to unity\footnote{In the independent case, we could have used a similar argument, but we chose to use a sharper bound. In the case of correlation, there is a closed form expression for such a probability \cite{al-naffouriDistributionIndefiniteQuadratic2016}, but it isn't as simple as the previous case.}, we have \begin{equation}
                \mathbb{E}[\hat{\ell}_{\mathrm{ETC}}^2]\leq \dfrac{\gamma_{\mathrm{th}}^{2M}\exp(2M(q+1))}{M^{2M}|\dmt{\Sigma}|^{2}} . 
            \end{equation} On the other hand, we have \begin{equation}
                P>\mathbb{P}[\|\rv{h}\|^2\leq \gamma_{\mathrm{th}}]
            \end{equation} Write \[\rv{h}=\dv{\mu}+\dmt{\Sigma}^{1/2}\rv{z}\] where $\rv{z}\sim\mathcal{CN}(\dv{0},\dmt{I}_M)$. Then \[\|\rv{h}\|^2=(\tilde{\dv{\mu}}+\rv{z})^H\dmt{\Sigma}(\tilde{\dv{\mu}}+\rv{z})\leq \lambda_1 \|\tilde{\dv{\mu}}+\rv{z}\|^2\] where $\tilde{\dv{\mu}}=\dmt{\Sigma}^{-1/2}\dv{\mu}$ and $\lambda_1$ is the maximum eigenvalue of $\dmt{\Sigma}$. Hence we have \begin{equation}
                \begin{aligned}
                    P &\geq \mathbb{P}\left[\lambda_1 \|\tilde{\dv{\mu}}+\rv{z}\|^2 \leq \gamma_{\mathrm{th}} \right]\\
                    &=F_{\chi_{2M}^2(2\|\tilde{\dv{\mu}}\|^2)}\left(\dfrac{2\gamma_{\mathrm{th}}}{\lambda_1}\right)\sim \dfrac{e^{-\|\tilde{\dv{\mu}}\|^2}\gamma_{\mathrm{th}}^M}{M! \lambda_1^M}
                \end{aligned} 
            \end{equation} Therefore the result holds.

\end{appendices}

\bibliographystyle{ieeetr} 
\bibliography{BibFiles/BibFinal2}

\end{document}